\newtheorem{thm}{Theorem}[section]
\newtheorem{lem}[thm]{Lemma}
\newtheorem{defi}[thm]{Definition}
\newtheorem{question}[thm]{Question}
\newtheorem{example}[thm]{Example}
\newtheorem{remark}[thm]{Remark}
\newenvironment{rmk}{\begin{remark} \em}{\end{remark}}
\begin{document}
\title{Symmetric uncoded caching schemes with low\\
 subpacketization levels}
\author{Tai Do Duc, Shuo Shao, Chaoping Xing}
\date{}

\maketitle

\begin{abstract}
Caching is a commonly used technique in content-delivery networks which aims to deliver information from hosting servers to users in the most efficient way.
In $2014$, Maddah-Ali and Niessen \cite{ali1} formulated caching into a formal information theoretic problem and it has gained a lot of attention since then.
It is known that the caching schemes proposed in \cite{ali1} and \cite{yu} are optimal, that is, they require the least number of transmissions from the server to satisfy all users' demands.
However for these schemes to work, each file needs to be partitioned into $F^*$ subfiles ($F^*$ is called the subpacketization level of files) with $F^*$ growing exponentially in the number $K$ of users.
As a result, it is problematic to apply these schemes in practical situations, where $K$ tends to be very large.
There rise the following questions: (1) are there optimal schemes in which each file is partitioned into $F$ subfiles, where $F$ is not exponential, say polynomial for example, in $K$? (2) if the answer to this question is no, is there a near-optimal scheme, a scheme which is as asymptotically good as the one in \cite{ali1,yu}, with $F$ polynomial in $K$? Both these questions are open.

Our main contribution in this paper is to provide answers to above questions. Firstly, we prove that under some mild restriction on user's cache rate, there are no optimal schemes with $F$ smaller than $F^*$. Moreover, we give necessary and sufficient conditions for the existence of optimal schemes in this case. Secondly, we provide an affirmative answer to  the second question raised above by an explicit construction and a detailed performance analysis.
\end{abstract}

\section{Introduction}
Caching is a common strategy used in data management in order to reduce network traffic congestion in peak times. This technique was studied since as early as $1982$ by Dowdy and Foster \cite{dow}.
In the caching setting, there is a placement phase and a deliver phase which are performed during off-peak times and peak times, respectively. In the placement phase, each user stores some data from the database in its cache.
These pre-stored data allow the server to reduce the amount of information distributed over the network during peak times (delivery phase).
At the early stage of research on caching \cite{alm,bae,bor,dow}, the gain by the server (or the reduction in the amount of information sent) merely comes from local duplication of the files in users' caches. This gain becomes negligible if the cache sizes are small compared to the amount of content stored in the server.
There is a need for a more systematic method to study the problem.

\medskip

In $2014$, Maddah-Ali and Niessen \cite{ali1} formulated caching into a formal information theoretic problem which has gained considerable attention from researchers in information theory.
Assume that there is a network consisting of one server with a database of $N$ files and there are $K$ users which are connected to the server through an error-free shared link.
Each user has a cache memory big enough to store $M$ of the files, where $M\leq N$ is a non-negative integer.

\begin{center}
\includegraphics[scale=0.8]{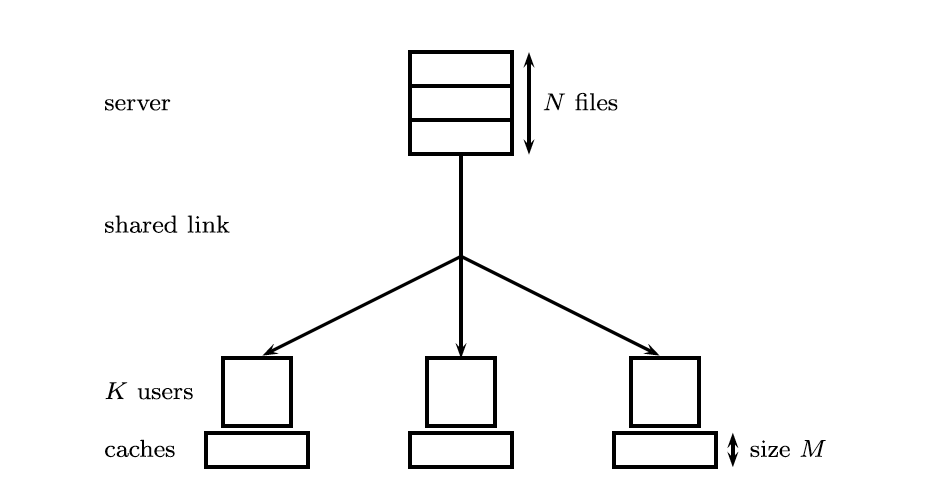}
\end{center}

\noindent A caching scheme consists of two phases, \textit{placement phase} and \textit{delivery phase}. 
\begin{enumerate}
\item Placement phase
\begin{itemize}
\item Each file is partitioned into $F$ subfiles, where $F$ is a positive integer. The number $F$ is called \textbf{subpacketization level} of the scheme. Note that there are totally $NF$ subfiles stored in the database.
\item Each user stores $MF$ linear combinations of these $NF$ subfiles in its cache.
\end{itemize}
\item Delivery phase \\
Each user requests one file and the server sends information to the users through the shared link to satisfy all users' requests.
\end{enumerate}
We define the \textbf{rate} $R$ of a caching scheme to be the smallest positive real number such that any users' demand can be met by $RF$ transmissions from the server. Given the values of $K,M,N$, the caching problem reduces to finding suitable values of $F$ so that we can design a scheme with the rate $R$ as smallest as possible. 

\medskip

Since the formal definition of the caching problem by Maddah-Ali and Niessen \cite{ali1}, there is an increasing interest in this line of research \cite{bha,chi,nie,pia,ram,shang,sha,tia,yan1,yan2,yu,yu2}. Among these works, the schemes proposed in \cite{ali1} and \cite{yu} are known to be optimal schemes. 

\subsection{Known optimal schemes and open questions} \label{optimalschemes}

It took a while, from $2014$ to $2016$, for Piantanida et. al. \cite{pia} to prove that if $N\geq K$, then the scheme in \cite{ali1} is optimal under the so-called \textit{uncoded} caching requirement. In an uncoded caching scheme, each user caches directly $MF$ subfiles from the $NF$ subfiles in the database. In a \textit{coded} caching scheme, each user is allowed stored $MF$ linear combinations of those $NF$ subfiles. We will only focus on uncoded caching schemes in this paper. 

\medskip

Continuing further on the work by Maddah-Ali and Niesen, Yu et. al. \cite{yu}, in $2018$, constructed an uncoded caching scheme which is optimal for any values of $K$ and $N$. This scheme has rate
\begin{equation}\label{rate_optimal}
R^*=\frac{K-KM/N}{1+KM/N}-\frac{{K-\min\{K,N\}\choose KM/N+1}}{{K\choose KM/N}}.
\end{equation}
We have used the term \textit{optimal} many times but have not justified it clearly until this point. From now on, we call an uncoded caching scheme with $K$ users, $N$ files and cache size $M$ optimal if it has rate $R=R^*$, where $R^*$ is defined by (\ref{rate_optimal}). Despite new constructions of numerous caching schemes (see \cite{bha,chi,ram,shang,yan1,yan2} for examples) since Maddah-Ali and Niesen's formal formulation of the caching problem, the scheme proposed by Yu et. al. remains the only known uncoded caching scheme which attains the rate $R^*$. 

\medskip

Nevertheless, there is an unpleasant problem which arises from both schemes proposed in \cite{ali1} and in \cite{yu}. For these schemes to work, each file needs to be partitioned into 
\begin{equation} \label{subpacket_optimal}
F^*={K\choose KM/N}
\end{equation}
subfiles.
As $F^*$ grows exponentially in $K$, these schemes may not be relevant for many practical implementations which require $K$ to be large.
There has been considerable effort, starting from $2016$ by Shanmugam at. el. \cite{sha}, to remedy this problem by constructing new schemes which have subpacketization level $F$ smaller than $F^*$ while not increasing the value of $R^*$ by too much, see \cite{bha,chi,ram,shang,yan1,yan2}.
On the other hand, an obvious better solution to this problem is to find an optimal scheme with subpacketization level $F$ smaller than $F^*$, or even better, $F$ polynomial in $K$.
Though there is clear suggestion on a trade-off between $R$ and $F$, that is, $R$ is small if $F$ is large and vice versa, an understanding on this trade-off remains vague.
We summarize our discussion in this paragraph into the following questions.

\medskip

\begin{question} \label{question1}
Let $K$ and $N$ be fixed positive integers. Let $M$ be a nonnegative integer such that $M\leq N$. Define $R^*$ and $F^*$ as in (\ref{rate_optimal}) and (\ref{subpacket_optimal}), respectively. Is there an uncoded caching scheme with $K$ users, $N$ files and cache size $M$ whose rate is $R=R^*$ and whose subpacketization level $F$ is smaller than $F^*$? Furthermore if it is possible, classify the subpacketization levels of optimal uncoded caching schemes.
\end{question}

\begin{question} \label{question2}
If the answer to Question $1$ is no, is there an uncoded caching scheme with rate $R$ asymptotically close to $R^*$ and subpacketization level $F$ polynomial in the number $K$ of users?
\end{question}

\subsection{Best known results and our results}
We remark that both Question \ref{question1} and Question \ref{question2} are open and Question \ref{question2} was asked by the authors in \cite{ram} and \cite{shang}.
The best known results toward Question \ref{question2} are the ones by Yan et. al. \cite{yan1} in $2017$ and Shangguan et. al. \cite{shang} in $2018$. In \cite{yan1}, the authors constructed two uncoded caching schemes with $R$ asymptotically close to $R^*$ and $F$ smaller than $F^*$ by a factor which is exponential in $K$.
However in these schemes, we still have $F$ exponentially large compared to $K$.
In \cite{shang}, the authors made a further improvement by proposing two new schemes which have $F$ sub-exponential in $K$.
\medskip

On the other hand, there is not much progress on Question \ref{question1}, as no optimal scheme with $F<F^*$ has been found.
The best known work on Question \ref{question1} is done in \cite{yan1} where the authors showed that $F^*$ is the smallest subpacketization level of an optimal scheme.
However, this result is only applied to a special class of uncoded caching schemes, called Placement Delivery Arrays (PDAs), and it does not provide us an insight on sufficient conditions for the existence of optimal uncoded caching schemes.
In summary, do the scarcity of optimal schemes with $F<F^*$ and the result on PDAs suggest that these schemes simply do not exist?

\medskip

Our main contribution in this paper is to provide answers to both Question \ref{question1} and Question \ref{question2}.
Our results are obtained under the restrictions of a \textit{symmetric} uncoded caching scheme, a natural property which is present in all currently known uncoded caching schemes (see Definition \ref{symmetric} for definition). In these schemes, each user caches the same fraction of each file and each subfile is cached by the same number of users. Our first main result is the following.

\begin{thm} \label{main_lowerbound0}
Let $K$ and $N$ be positive integers. Let $M$ be a non-negative integer such that $M\leq N$. In any symmetric uncoded caching scheme with $K$ users, $N$ files, cache size $M$, and rate $R$, we have
$$R\geq R^*.$$
Furthermore if 
$\frac{M}{N}\leq \frac{\min\{K,N\}}{K},$
then there exists a symmetric uncoded caching scheme with rate $R=R^*$
if and only if $F\equiv 0\pmod{F^*}$.
\end{thm}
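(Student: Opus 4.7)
The statement has three sub-claims: the lower bound $R \ge R^*$, the sufficient direction (if $F^* \mid F$, an optimal scheme exists), and the necessary direction (achieving $R = R^*$ forces $F^* \mid F$). I would handle each separately.

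For the lower bound, my plan is to parametrize a symmetric uncoded placement by its type spectrum. Setting $t = KM/N$, for each $t$-subset $S \subseteq [K]$ let $n_S$ denote the number of per-file subfiles cached by exactly the users in $S$; symmetry translates into $\sum_{S \ni k} n_S = MF/N$ for every $k$ and $\sum_S n_S = F$. A symmetric uncoded scheme is in particular an uncoded scheme, so the general uncoded converse from \cite{yu} already gives $R \ge R^*$; I would nevertheless re-derive the inequality inside the symmetric parametrization via an index-coding / worst-case demand argument, since the equality conditions along the way are precisely what will drive the necessary direction.

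The sufficient direction is a direct construction. Given $F = q F^*$, I would take the Yu--Maddah-Ali--Avestimehr scheme of \cite{yu} at subpacketization $F^*$ and further subdivide each of its subfiles into $q$ equal pieces; each original transmission then splits into $q$ shorter ones, the rate (transmissions per subfile) is unchanged, and the resulting scheme is still symmetric uncoded with subpacketization $F$ and rate $R^*$.

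The heart of the proof is the necessary direction. Starting from a symmetric uncoded scheme with $R = R^*$, I would trace the equality conditions in the lower-bound derivation. These should force two rigidity statements: every $t$-subset $S$ is active, i.e.\ $n_S \ge 1$, and every transmission simultaneously serves exactly $t+1$ distinct users. Combined with the linear symmetry constraints on the $n_S$, the goal is to deduce that $n_S$ is a constant $n$ independent of $S$, whence $F = n \binom{K}{t} = n F^*$. The hypothesis $M/N \le \min\{K,N\}/K$, equivalently $t \le \min\{K,N\}$, enters here: it guarantees the existence of a demand vector with $\min\{K,N\}$ distinct files, for which the correction term in $R^*$ matches the number of ``wasted'' transmissions exactly, making the equality analysis clean. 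The main obstacle I anticipate is upgrading the positivity $n_S \ge 1$ to the constancy $n_S = n$, since symmetry alone (a system of only $K$ equations in $\binom{K}{t}$ unknowns) admits many non-constant solutions; I expect this to require either a symmetrization argument summing the rate bound over all permutations of the demand vector, or an analysis of several carefully chosen demand patterns whose collective equality conditions force the $n_S$ to coincide.
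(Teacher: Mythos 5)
Your overall framework matches the paper's: parametrize the placement by the sets $U_i=\{j: W_{1,j}\text{ cached by user }i\}$ (dual to your $n_S$: since every subfile is cached by exactly $t=KM/N$ users, $n_S=|\bigcap_{i\in S}U_i|$ for a $t$-subset $S$), derive the bound by a genie-aided virtual-user argument for each demand pattern, and then symmetrize over demand patterns to extract rigidity. The sufficiency direction you propose (subdivide each YMA subfile into $q$ pieces) is a perfectly valid alternative to the paper's explicit scheme in the appendix, which instead adds an index $j\in[h]$ to the subfile labels; both clearly preserve symmetry and rate.

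The genuine gap is exactly the obstacle you flag: upgrading positivity to constancy of $n_S$. Two things to note. First, the intermediate rigidity statement you posit about transmissions (``every transmission serves exactly $t+1$ users'') is not established by, and is not needed in, the paper's argument; the equality analysis there never touches the delivery, only the placement sets $U_i$. Second, and more importantly, the paper fills the gap with a purely combinatorial Claim you have not supplied: for each fixed $k\le\min\{K,N\}$, the union sizes $T_{I_k}=|\bigcup_{i\in I_k}U_i|$ are constant over all $k$-subsets $I_k\subseteq[K]$. This is proved by induction on $|I_k\cap J_k|$. The base of the induction uses that the ordered sums $S_D=\sum_{k}|U_{d_1}\cup\cdots\cup U_{d_k}|$, over ordered $\min\{K,N\}$-tuples $D$ from $[K]$, are all equal when $R=R^*$ (this follows since the per-$D$ inequalities sum to the exact value of $R^*$); comparing $S_D$ with $S_{D'}$ for two orderings that agree except for one transposition isolates $T_{I_k}-T_{J_k}$ when $|I_k\cap J_k|=k-1$. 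The inductive step reduces $|I_k\cap J_k|=l-1$ to larger intersections by inserting an intermediate $k$-set. Once $T_{I_k}$ is constant for all $k\le\min\{K,N\}$, so are the intersections $|\bigcap_{i\in S}U_i|=n_S$, and the counting identity $\sum_{|S|=t}n_S=\binom{t}{t}F=F$ forces $\binom{K}{t}\mid F$. The hypothesis $M/N\le\min\{K,N\}/K$ enters here precisely because the Claim only holds for $k\le\min\{K,N\}$, and one needs it at $k=t$; your stated reason (``guarantees a demand vector with $\min\{K,N\}$ distinct files'') is not the mechanism, since such demand vectors exist regardless of $t$. You should work out the induction explicitly, as it is the crux of the necessity direction.
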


\noindent For Theorem \ref{main_lowerbound0} to work, the user's cache rate $\frac{M}{N}$ need to not exceed $\frac{\min\{K,N\}}{K}$.
Under this condition, we prove that the necessary and sufficient condition for the existence of an optimal scheme is $F\equiv 0\pmod{F^*}$, which implies the non-existence of an optimal scheme with $F<F^*$.
We note that if $N\geq K$, then the inequality $\frac{M}{N}\leq 1=\frac{\min\{K,N\}}{K}$ holds automatically and our result confirms that in this case, there is no optimal scheme with $F<F^*$.
Nevertheless, there is a minor open case in Theorem \ref{main_lowerbound0}, which is the case $K>N$ and $\frac{M}{N}>\frac{N}{K}$.
Despite this open case, our result strongly hints towards the suggestion that there is no symmetric uncoded caching scheme with rate $R=R^*$ and subpacketization level $F<F^*$.
In order to find an uncoded caching scheme with $R=R^*$ and $F<F^*$, one may need look to the direction of non-symmetric schemes, which is still a completely open land.

\medskip

Our second main result of this paper is the following.
\begin{thm} \label{main_constr0}
Let $n$ be a positive integer. Let $a$ and $b$ be non-negative integers such that $a+b\leq n$. Then there exists a symmetric uncoded caching scheme with the following parameters.
$$K={n\choose a}, \ F={n\choose b}, \ \frac{M}{N}=\frac{{n\choose b}-{n-a\choose b}}{{n\choose b}}, \ R=\frac{{n\choose a+b}}{{n\choose b}}.$$
Furthermore, let $\epsilon> 0$ be a positive real number. The above scheme, with suitable choices of $a,b,n$, has parameters $R,K$ and $F$ satisfying the following conditions.
\begin{itemize}
\item[(i)] $R^*\leq R \leq R^*(1+\epsilon)$,
\item[(ii)] $K\leq F\leq K^{1+\epsilon}$, and
\item[(iii)] $F^*\geq F^{(\log F)^{1/\epsilon}}$.
\end{itemize}
\end{thm}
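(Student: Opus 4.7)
My plan is to give a purely combinatorial construction indexed by subsets of $[n]=\{1,\ldots,n\}$. Label the $K=\binom{n}{a}$ users by the $a$-subsets $A\subseteq[n]$, and for each of the $N$ files partition it into $F=\binom{n}{b}$ subfiles $\{W_{f,B}\}_B$ indexed by the $b$-subsets $B\subseteq[n]$. The placement rule is: user $A$ caches $W_{f,B}$ if and only if $A\cap B\neq\emptyset$. Since the number of $b$-subsets of $[n]$ meeting a fixed $a$-subset equals $\binom{n}{b}-\binom{n-a}{b}$, independently of the chosen $a$-subset, each user caches the same fraction $M/N=(\binom{n}{b}-\binom{n-a}{b})/\binom{n}{b}$ of every file; by the symmetric role of $a$ and $b$, each subfile is cached by exactly $\binom{n}{a}-\binom{n-b}{a}$ users, so the scheme is symmetric in the sense of Definition~\ref{symmetric}.

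For the delivery phase, on any demand vector $(d_A)$ the server transmits, for each $(a+b)$-subset $S\subseteq[n]$, the XOR
\[
X_S\;=\;\bigoplus_{\substack{A\subseteq S\\ |A|=a}} W_{d_A,\,S\setminus A},
\]
for a total of $\binom{n}{a+b}$ transmissions, hence rate $R=\binom{n}{a+b}/\binom{n}{b}$. I will verify correctness as follows: if user $A$ lacks $W_{d_A,B}$ (so $A\cap B=\emptyset$), then $S=A\cup B$ has size $a+b$ and in $X_S$ the desired subfile appears as a summand. Every other summand $W_{d_{A'},S\setminus A'}$ has $A'\subseteq S$, $|A'|=a$, $A'\neq A$; since $|A|=|A'|=a$ we have $A\not\subseteq A'$, so $A\cap(S\setminus A')=A\setminus A'\neq\emptyset$, which means user $A$ has that summand in its cache and can XOR it out to recover $W_{d_A,B}$.

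For the asymptotic part, I will take $N\geq K$ so that the second term of (\ref{rate_optimal}) vanishes; combining $K-KM/N=\binom{n-b}{a}$ with the three-term identity $\binom{n}{a+b}\binom{a+b}{a}=\binom{n}{b}\binom{n-b}{a}$ yields the clean formula
\[
\frac{R}{R^*}\;=\;\frac{1+\binom{n}{a}-\binom{n-b}{a}}{\binom{a+b}{a}}.
\]
Given $\epsilon>0$ I would pick $\delta=\delta(\epsilon)>0$ sufficiently small and, for a large integer $a$, set $b:=a^2$, $c:=\lceil\delta a\rceil$ and $n:=a+b+c$. Then the product expansion $\binom{n}{a}/\binom{a+b}{a}=\prod_{i=0}^{a-1}(1+c/(a+b-i))\leq e^{ca/(b+1)}\leq e^{\delta}$ gives (i); the Stirling estimates $\log K=a\log a+O(a)$ and $\log F=(1+\delta)a\log a+O(a)$ yield $K<F\leq K^{1+\epsilon}$, establishing (ii); and since $KM/N=K-\binom{a+c}{a}$ with $\binom{a+c}{a}=\binom{(1+\delta)a}{a}$ exponential in $a$, the bound $\log F^{*}=\log\binom{K}{\binom{a+c}{a}}\geq \binom{a+c}{a}\log(K/\binom{a+c}{a})$ is exponential in $a$ and dominates the polynomial-in-$a$ quantity $(\log F)^{1+1/\epsilon}$, giving (iii).

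The main obstacle is the three-way balancing of (i)--(iii): condition (i) constrains the excess $c:=n-a-b$ from above (roughly $ca\lesssim\epsilon b$), condition (ii) also forces $c/a\lesssim\epsilon$, and condition (iii) requires $\binom{a+c}{a}$ to be super-polynomial in $\log F$. The scale $b\asymp a^2$, $c\asymp\epsilon a$ is essentially the unique regime (up to constants) in which all three can be satisfied simultaneously, so the verification rests on sharp product/Stirling estimates on binomials of comparable orders rather than on the coarser entropy-based approximations that suffice elsewhere in the caching literature.
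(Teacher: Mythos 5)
Your construction and its correctness verification coincide exactly with the paper's (which is \cite[Construction I]{shang}): users indexed by $a$-subsets, subfiles by $b$-subsets, placement by nonempty intersection, delivery via one XOR per $(a+b)$-subset, and the same ``every other summand is cached because $A\setminus A'\neq\emptyset$'' decoding argument; the identity $\binom{n}{a+b}\binom{a+b}{a}=\binom{n}{b}\binom{n-b}{a}$ and the resulting formula for $R/R_0$ also match equation (\ref{rate}). Where you genuinely diverge is the asymptotic parameter regime. The paper fixes $c=\lceil 1+1/\epsilon\rceil$ as a \emph{constant}, takes $a=\lceil(\log n)^c\rceil$ polylogarithmic and $b=n-a-c$, and lets $n\to\infty$; there $\binom{a+c}{a}\approx a^c/c!$ is only polynomial in $a$, so $\log F^*$ exceeds $(\log F)^{1+1/\epsilon}$ by just a factor of order $\log n$, i.e.\ (iii) is achieved with essentially no slack. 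You instead let $a\to\infty$ with $b=a^2$ and $c=\lceil\delta a\rceil$ growing linearly, so that $\binom{a+c}{a}$ is exponential in $a$; then $\log F^*$ is exponential in $a$ while $(\log F)^{1+1/\epsilon}$ is polynomial, and (iii) holds with huge slack, while (i) and (ii) survive because $ca/(b+1)\le\delta+o(1)$ and $\log F/\log K\to 1+\delta$. Both regimes work, and your estimates are correct, but this means your closing claim that $b\asymp a^2$, $c\asymp\epsilon a$ is ``essentially the unique regime'' is false (it is immaterial to the proof). Two points to tighten: first, you only argue the upper bound in (i); the lower bound $R\geq R^*$ should be invoked from Lemma \ref{main1} (the scheme is symmetric) or checked directly as the paper does via the convexity of $b\mapsto\binom{a+b}{a}+\binom{n-b}{a}$. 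Second, you assume $N\geq K$ so that $R^*$ equals $\frac{K-KM/N}{1+KM/N}$; since the theorem places no restriction on $N$, you should remove this by using $R^*\geq\frac{K-KM/N}{1+KM/N}\left(\frac{1}{K}+\frac{M}{N}\right)$ and checking $M/N\to 1$, which does hold in your regime since $\binom{n-a}{b}/\binom{n}{b}=\binom{b+c}{c}/\binom{n}{a+c}\to 0$.
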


\noindent The conditions (i)-(iii) clearly imply that the scheme in Theorem \ref{main_constr0} has rate $R$ asymptotically close to $R^*$ and subpacketization level $F$ polynomial in $K$.
Moreover $F$ is sub-exponentially smaller than $F^*$.
Thus, Theorem \ref{main_constr0} settles Question \ref{question2} completely.
However, after discovering the scheme in Theorem \ref{main_constr0}, we noticed, in the process of literature review for this paper, that our scheme was already known by Shangguan et. al. \cite{shang} in $2018$  via the language of hypergraph.
The reason that Question \ref{question2} is still open lies in the complexity of analyzing this scheme, as the authors in \cite{shang} also commented.
While our merit for solving Question \ref{question2} is a detailed analysis on the performance of the mentioned scheme, the construction of the scheme is fully credited to Shangguan et. al. \cite{shang}.

\medskip
\subsection{Organization}
The remaining of this paper is organized as follows. In Section \ref{preli}, we provide some definitions and technical lemmas which will be used throughout the paper.
In Section \ref{section_lowerbound} and Section \ref{section_construction}, we prove Theorem \ref{main_lowerbound0} and Theorem \ref{main_constr0}. In Section \ref{section_conclusion}, we conclude the paper with several open questions in this research direction.

\section{Preliminaries} \label{preli}
In this section, we state definitions and introduce notations which will be used throughout the rest of the paper.
Let $k$ and $n$ be positive integers such that $k\leq n$. We use the following notations.
\begin{itemize}
\item We denote the set $\{1,\dots,n\}$ by $[n]$. 
\item We call $A$ a $k$-subset of $[n]$ if $|A|=k$.
\item We call $A$ an ordered $k$-subset of $[n]$ if there are $k$ distinct elements $a_1,\dots,a_k\in [n]$ such that $A=(a_1,\dots,a_k)$.
\end{itemize}

\medskip

Next, we define uncoded caching schemes.
\begin{defi} \label{scheme}
Let $K,N$ and $F$ be positive integers. Let $M$ be a non-negative integer such that $M\leq N$.
We call a caching scheme \textbf{uncoded caching scheme} with parameters $K,M,N,F,R$ if it has the following properties.
\begin{itemize}
\item[(i)] There are $K$ users and $N$ files. \\
Denote the files by $W_1,\dots, W_N$. Each file $W_i$ is partitioned into $F$ subfiles $W_{i,1}, \dots, W_{i,F}$. We call $F$ the subpacketization level of the scheme.
\item[(ii)] In the placement phase, each user is allowed to store $MF$ subfiles $W_{i,j}$ in its cache. We call $M/N$ the \textbf{user's cache rate}, that is, each user caches on average $MF/N$ subfiles from a file. 
\item[(iii)] In the delivery phase
\begin{itemize}
\item Each user requests one file and the server sends transmissions to the users, each as a linear combination of $W_{i,j}$'s, to satisfy all users' requests.
\item The number $R$ called the rate of the scheme. It is the smallest positive real number such that any demand of the users can be met by $RF$ transmissions from the server.
\end{itemize}
\end{itemize}
\end{defi}

\medskip

By our knowledge, all existing uncoded caching schemes are symmetric in the following sense. 

\begin{defi}\label{symmetric}
We call an uncoded caching scheme \textbf{symmetric} if
\begin{itemize}
\item[(i)] Each user caches the same fraction of each file. That is, if a user caches $W_{i,j}$, then he or she also caches $W_{k,j}$ for any $k=1,\dots,N$.
\item[(ii)] Each subfile $W_{i,j}$ is cached by the same number of users.
\end{itemize}
\end{defi}

\noindent We will only focus on symmetric uncoded caching schemes in this paper.
We conclude this section with a simple observation on symmetric uncoded caching schemes which will be used repeatedly in Section \ref{section_lowerbound} and Section \ref{section_construction}.

\begin{lem} \label{observation}
In a symmetric uncoded caching scheme with parameters $K,M,N,R,F$, we have the following.
\begin{itemize}
\item[(a)] For each fixed $i\in [N]$, each user stores exactly $Z=MF/N$ subfiles $W_{i,j}$ from the file $W_i$.
\item[(b)] For each $i\in [N]$ and $j\in [F]$, the subfile $W_{i,j}$ is in the caches of exactly $t=KM/N$ users.
\end{itemize}
\end{lem}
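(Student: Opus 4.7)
The plan is to prove both parts by straightforward double counting, using the two symmetry axioms from Definition \ref{symmetric} one at a time.

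For part (a), I would fix any user, say user $u$, and let $Z_i$ denote the number of subfiles of $W_i$ that $u$ caches. The goal is to show $Z_1 = \cdots = Z_N$, after which summing to $MF$ gives $Z_i = MF/N$. The key input is condition (i) of Definition \ref{symmetric}: if $u$ caches $W_{i,j}$, then $u$ also caches $W_{k,j}$ for every $k \in [N]$. Consequently, the set of second indices $\{j : W_{i,j} \in \text{cache}(u)\}$ does not depend on $i$, and therefore $Z_i$ does not depend on $i$. Since user $u$ caches $MF$ subfiles in total (by Definition \ref{scheme}(ii)), we obtain $NZ_i = MF$, whence $Z_i = MF/N$.

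For part (b), I would apply condition (ii) of Definition \ref{symmetric}, which asserts that there is a single constant $t$ equal to the number of users caching $W_{i,j}$, independent of $(i,j)$. To pin down $t$, I would count the pairs $(u, (i,j))$ such that user $u$ caches subfile $W_{i,j}$ in two ways. Summing over users gives $K \cdot MF$, since each user has $MF$ subfiles in cache. Summing over subfiles gives $NF \cdot t$, since there are $NF$ subfiles and each lies in exactly $t$ caches. Equating yields $t = KM/N$, as claimed.

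I do not expect any real obstacle: both parts are pure double counts, and the symmetry axioms have been tailored precisely so that the relevant quantities are constant. The only thing to note is that the two symmetry conditions are genuinely needed separately — condition (i) drives part (a) and condition (ii) drives part (b) — so the write-up should keep these two counts distinct rather than merging them.
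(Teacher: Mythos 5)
Your proof is correct and matches the paper's argument: part (a) uses symmetry condition (i) to show the per-file cache count is constant and then normalizes by $N$, while part (b) uses condition (ii) together with the double count $K\cdot MF = NF\cdot t$. No discrepancies.
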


\begin{proof}
\item[(a)] Fix an user $U$. By Definition \ref{symmetric}.(i), there exists a positive integer $Z$ such that $U$ stores exactly $Z$ subfiles of any file in its cache. Hence the cache of $U$ has $NZ$ subfiles. We obtain $NZ=MF$ by Definition \ref{scheme}.(ii), which implies $Z=MF/N$.
\item[(b)] By Definition \ref{symmetric}.(ii), there exists a positive integer $t$ such that each subfile is cached by exactly $t$ users. Counting the number of pairs $(U,S)$, where $U$ is an user and $S$ is a subfile contained in the cache of $U$, in two ways, we obtain $KMF=tNF$, which implies $t=KM/N$.
\end{proof}

\bigskip

\section{Subpacketization levels of optimal schemes}\label{section_lowerbound}
In this section, we prove Theorem \ref{main_lowerbound0}. First, we recall the theorem for the convenience of the readers.
\begin{thm} \label{main_lowerbound}
Let $K$ and $N$ be positive integers. Let $M$ be a non-negative integer such that $M\leq N$. Consider any symmetric uncoded caching scheme with $K$ users, $N$ files, user's cache rate $\frac{M}{N}$ and rate $R$. We have
\begin{equation}\label{optimal}
R\geq \frac{K-KM/N}{1+KM/N}-\frac{{K-\min\{K,N\}\choose KM/N+1}}{{K\choose KM/N}}.
\end{equation}
Furthermore if 
$\frac{M}{N}\leq \frac{\min\{K,N\}}{K},$
then there exists a symmetric uncoded caching scheme with rate 
$R=\frac{K-KM/N}{1+KM/N}-\frac{{K-\min\{K,N\}\choose KM/N+1}}{{K\choose KM/N}}$
if and only if $F\equiv 0\pmod{{K\choose KM/N}}$.
\end{thm}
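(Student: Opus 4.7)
The plan is to parametrize symmetric schemes combinatorially, derive the lower bound from a max-versus-average inequality, and then extract from its equality case the divisibility $F^*\mid F$. By Lemma \ref{observation}(b), each subfile is cached by exactly $t=KM/N$ users, so I attach to each subfile index $j\in[F]$ a $t$-subset $A(j)\subseteq[K]$ (the same across files, by Definition \ref{symmetric}(i)). Setting $f_A=|\{j:A(j)=A\}|$, one has $\sum_A f_A=F$ and, by Lemma \ref{observation}(a), $\sum_{A\ni u}f_A=tF/K$ for every user $u$. The whole argument is driven by these non-negative integers $\{f_A\}$ indexed by the $t$-subsets of $[K]$.

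For the lower bound I would fix a worst-case demand vector $(d_1,\dots,d_K)$ with exactly $\min\{K,N\}$ distinct demanded files. A standard reduction shows that in an uncoded delivery one may restrict attention to transmissions dedicated to $(t+1)$-subsets $S\subseteq[K]$, each of the shape $\sum_{u\in S}W_{d_u,j_u}$ with $A(j_u)=S\setminus\{u\}$; every $u\in S$ decodes $W_{d_u,j_u}$ because it caches the remaining terms. For a fixed $S$, user $u$ still needs \emph{all} $f_{S\setminus\{u\}}$ subfiles of $W_{d_u}$ indexed by $S\setminus\{u\}$, so $S$ requires at least $\max_{u\in S}f_{S\setminus\{u\}}$ transmissions; the only $S$'s that can be skipped are those lying entirely within a single demand-cluster, and in the worst-case demand these number exactly $\binom{K-\min\{K,N\}}{t+1}$.

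Summing over the remaining $(t+1)$-subsets and applying $\max\ge\mathrm{avg}$ yields
\[
RF\ \ge\ \sum_{S\text{ relevant}}\max_{u\in S}f_{S\setminus\{u\}}\ \ge\ \frac{1}{t+1}\sum_{S\text{ relevant}}\sum_{u\in S}f_{S\setminus\{u\}},
\]
and a careful double count (each $t$-subset $A$ contributes to $K-t$ admissible pairs $(S,u)$, up to cluster corrections) evaluates the right-hand side to exactly $R^*F$, giving $R\ge R^*$. The hypothesis $M/N\le\min\{K,N\}/K$, i.e.\ $t\le\min\{K,N\}$, is precisely what keeps $\binom{K-\min\{K,N\}}{t+1}$ a genuine combinatorial count and the accounting sharp.

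For the characterization, $R=R^*$ forces equality in $\max\ge\mathrm{avg}$ for every relevant $S$, so $f_{S\setminus\{u\}}$ is independent of $u\in S$. Hence any two $t$-subsets differing by a single element carry the same $f$-value; since the Johnson graph $J(K,t)$ is connected, $f_A$ is a constant $f$, giving $F=f\binom{K}{t}=fF^*$ and $F^*\mid F$. The converse is immediate: if $F=cF^*$, run $c$ independent parallel copies of the Yu et al.\ scheme. The principal obstacle is the $N<K$ refinement — precisely identifying which $(t+1)$-subsets carry a true lower bound and verifying that cluster corrections do not disrupt the uniformity argument that forces $f_A$ to be constant; this is exactly where the restriction $M/N\le\min\{K,N\}/K$ will be felt, since without it the relevant edges of $J(K,t)$ may fail to cover enough pairs to propagate constancy of $f_A$ across all $t$-subsets.
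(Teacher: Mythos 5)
The central step of your lower bound --- the ``standard reduction'' asserting that an uncoded delivery may be restricted to transmissions dedicated to $(t+1)$-subsets $S$ of the form $\sum_{u\in S}W_{d_u,j_u}$ --- is not standard and is precisely the missing argument. The rate $R$ is defined against \emph{arbitrary} deliveries (any $RF$ linear combinations of subfiles), so a transmission need not be ``dedicated'' to any single $(t+1)$-subset, and your per-subset accounting $\sum_S \max_{u\in S} f_{S\setminus\{u\}}$ does not lower-bound the number of general linear transmissions. What you have written is essentially the Placement Delivery Array argument of Yan et al., which the paper explicitly notes applies only to that restricted class of schemes and does not settle the general question. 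The paper instead proves the bound information-theoretically: for each ordered tuple $D$ of distinct users it builds a virtual user $V_D$ by concatenating the caches of $d_1,\dots$ restricted to successively fewer files, observes that $V_D$ can decode everything so $RF\ge NF-\sum_k|U_{d_1}\cup\cdots\cup U_{d_k}|$, and then averages this cut-set inequality over all $D$, evaluating $\sum|U_{d_1}\cup\cdots\cup U_{d_k}|=F\bigl(\binom{K}{k}-\binom{K-t}{k}\bigr)$ by double counting. Your identification of the ``skippable'' subsets as those inside a single demand cluster is also not the right bookkeeping (in Yu et al.\ the saving $\binom{K-\min\{K,N\}}{t+1}$ comes from subsets avoiding all demand leaders, and in the paper's proof from the truncation of unions at index $\min\{K,N\}$), though this is secondary to the main gap.

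Your treatment of the equality case is, in spirit, where the paper also lands: in your notation $|U_{d_1}\cap\cdots\cap U_{d_t}|=f_{\{d_1,\dots,d_t\}}$, and the paper shows these are all equal, whence $F=f\binom{K}{t}$. But the paper has to earn this from the equality conditions of its cut-set bounds: equality forces all sums $S_D=\sum_k|U_{d_1}\cup\cdots\cup U_{d_k}|$ to coincide, and a genuinely nontrivial induction on $|I_k\cap J_k|$ (swapping two adjacent coordinates of $D$) propagates this to the statement that $|\cup_{i\in I_k}U_i|$ depends only on $k$; your shortcut via equality in $\max\ge\mathrm{avg}$ inherits the unproven reduction and so does not stand on its own. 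The converse direction (parallel copies of the Yu et al.\ scheme when $F^*\mid F$) matches the paper's appendix and is fine.
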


\noindent We remark that the authors in \cite{yu} proved (\ref{optimal}) by showing that
$$R_F\geq \frac{K-KM/N}{1+KM/N}-\frac{{K-\min\{K,N\}\choose KM/N+1}}{{K\choose KM/N}}-\frac{c_0}{F},$$
where $R_F$ denotes the rate of a symmetric uncoded caching scheme with subpacketization level $F$ and $c_0$ is a constant independent of $F$. By letting $F$ tend to infinity, they obtain (\ref{optimal}).

\medskip

In this section, we give another proof for (\ref{optimal}) from which we can draw a conclusion on $F$ in the case of equality in (\ref{optimal}), or in other words, classify the subpacketization level of an optimal scheme.
Our proof for Theorem \ref{main_lowerbound} is divided into two lemmas. In the first lemma, we prove (\ref{optimal}). In the second lemma, we classify $F$ in the case of equality in (\ref{optimal}).

\begin{lem} \label{main1}
 In a symmetric uncoded caching scheme with $K$ users, $N$ files, user's cache rate $\frac{M}{N}$ and rate $R$, we have
$$R\geq \frac{K-KM/N}{1+KM/N}-\frac{{K-\min\{K,N\}\choose KM/N+1}}{{K\choose KM/N}}.$$
\end{lem}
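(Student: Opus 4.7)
\medskip
\noindent\textbf{Proof plan.} My plan is to translate the symmetry of the placement into a concrete combinatorial count, specialize to a worst-case demand, and then lower-bound the delivery cost by a clique-cover argument. Definition \ref{symmetric}(i) says that whether a user $u$ caches $W_{i,j}$ depends on $(u,j)$ only, and Lemma \ref{observation}(b) tells us that for every $j\in [F]$ the set $A_j\subseteq [K]$ of users caching slot $j$ has size $t:=KM/N$. Writing $F_A:=|\{j\in [F]:A_j=A\}|$ for each $t$-subset $A\subseteq [K]$, the two parts of Lemma \ref{observation} translate into
\begin{equation*}
\sum_{|A|=t} F_A = F \qquad \text{and} \qquad \sum_{A\ni u} F_A = \frac{MF}{N} \text{ for each } u\in [K].
\end{equation*}

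\medskip
\noindent Next I would specialize to a hard demand. Set $s:=\min\{K,N\}$ and ask users $1,\ldots,s$ to request pairwise distinct files $W_1,\ldots,W_s$ (the remaining users, if any, repeat an earlier request). Any single transmission is a linear combination of cached subfiles; if it supplies the missing subfile $W_{d_u,j}$ to user $u$ (so $u\notin A_j$), then every other summand $W_{d_v,j'}$ must be cached by $u$, i.e., $u\in A_{j'}$. Hence one may group the delivery requirement $(u,j)$ by the unique $(t+1)$-subset $B=A_j\cup\{u\}\subseteq [K]$ it determines, and a single transmission delivers at most one subfile per user of $B$. Summing over $B$ yields
\begin{equation*}
RF \;\geq\; \sum_{\substack{|B|=t+1 \\ B\cap [s]\neq \emptyset}} \max_{u\in B\cap [s]} F_{B\setminus\{u\}}.
\end{equation*}

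\medskip
\noindent Finally I would collapse this sum using the degree constraints on $(F_A)$. Bounding $\max$ from below by the average, interchanging the order of summation, and invoking $\sum_{A\ni u} F_A=MF/N$ together with elementary binomial identities should reduce the right-hand side to $\bigl(\binom{K}{t+1}-\binom{K-s}{t+1}\bigr) F/\binom{K}{t}$, giving $R\geq R^*$. The main obstacle is the tightness of this last step: replacing $\max$ by the average is generically loose, so to recover the sharp constant I expect one has to argue by convexity that the minimum of the delivery cost, viewed as a function of $(F_A)$ subject to the degree constraints, is attained at the uniform profile $F_A\equiv F/\binom{K}{t}$, at which point the count reduces to the classical Maddah-Ali--Niesen count $\binom{K}{t+1}$ minus the correction $\binom{K-s}{t+1}$ coming from the $(t+1)$-subsets that avoid every leader user. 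Carrying out this symmetrization cleanly, and in particular handling the asymmetric regime $s=N<K$ where leaders and non-leaders play very different roles, is the principal difficulty.
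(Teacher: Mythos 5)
Your approach is genuinely different from the paper's. The paper uses a genie-aided cut-set argument: for each ordered $N$-subset (or permutation) $D=(d_1,\dots)$ of $[K]$ it constructs a \emph{virtual user} $V_D$ whose cache is the union $\bigcup_{k} U_{d_k}$ taken in a telescoping fashion, observes that $V_D$ can decode all $N$ files, and therefore $RF\geq NF-\sum_k |U_{d_1}\cup\cdots\cup U_{d_k}|$. It then averages this inequality over all choices of $D$, and the union-count identity $\sum_{\{d_1,\dots,d_k\}}|U_{d_1}\cup\cdots\cup U_{d_k}|=F\bigl(\binom{K}{k}-\binom{K-t}{k}\bigr)$ collapses the sum. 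This avoids any per-transmission combinatorics. Your route instead fixes one hard demand and attempts an index-coding / clique-cover count over $(t+1)$-subsets $B$ followed by a convexity symmetrization of the profile $(F_A)$.

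There is a genuine gap in your central inequality
\[
RF \;\geq\; \sum_{\substack{|B|=t+1\\ B\cap[s]\neq\emptyset}}\ \max_{u\in B\cap[s]} F_{B\setminus\{u\}}.
\]
Your justification is that ``a single transmission delivers at most one subfile per user of $B$.'' That is true per user, but it does not give a disjointness across distinct $B$'s. A single transmission can simultaneously deliver subfile $(u_1,A_1)$ to user $u_1$ and $(u_2,A_2)$ to user $u_2$; the XOR-decodability constraint only forces $u_1\in A_2$ and $u_2\in A_1$, so the two associated $(t+1)$-subsets $B_1=A_1\cup\{u_1\}$ and $B_2=A_2\cup\{u_2\}$ merely share $\{u_1,u_2\}$ and are in general \emph{different}. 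Thus summing the per-$B$ estimates over all $B$ double-counts transmissions; the right-hand side is not a valid lower bound on $RF$ as argued. In the uniform profile both sides happen to equal $R^*F$, which is why the claim ``looks'' right, but the inequality itself needs a separate proof (in effect, it is equivalent in strength to the lemma you are trying to establish). Beyond this, the symmetrization step at the end (that the convex functional $\mathbf{F}\mapsto\sum_B\max_u F_{B\setminus\{u\}}$ is minimized on the constraint polytope at the uniform profile) is asserted but not proven, and you correctly note that the case $s=N<K$ breaks the permutation symmetry you would need to invoke Jensen; the constraints $\sum_{A\ni u}F_A=tF/K$ are $S_K$-symmetric, but your objective is only $(S_s\times S_{K-s})$-symmetric, so orbit-averaging does not land on the uniform profile. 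The paper's virtual-user averaging over ordered demand tuples resolves exactly this asymmetry: averaging over \emph{which} users are the leaders restores full $S_K$-symmetry before any counting is done.
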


\begin{proof} 
The proof is divided into two cases, $K\geq N$ and $K<N$.
\medskip

\textbf{Case 1.} Assume $K\geq N$.
In this case, we need to prove that
\begin{equation}\label{opt1}
R\geq \frac{K-KM/N}{1+KM/N}-\frac{{K-N\choose KM/N+1}}{{K\choose KM/N}}.
\end{equation}
Let $W_1,\dots,W_N$ denote $N$ files and let $1,\dots,K$ denote $K$ users of the scheme. Let $D=(d_1,\dots,d_N)$ be any ordered $N$-subset of $[K]$. Assume that user $d_i$ requests file $W_i$, $i=1,\dots, N$. Consider a virtual user $V_D$ whose cache is filled as follows.
\begin{itemize}
\item Step $1$. Add all subfiles of $W_{{i}}, \ i=1,\dots,N,$ which are in the cache of user $d_1$ to $V_D$. 
\item Step $2$. Add all subfiles of $W_{{i}}, \ i=2,\dots,N,$ which are in the cache of user $d_2$ to $V_D$. \\
$\dots\dots\dots\dots$
\item Step $k$. Add all subfiles of $W_{{i}}, \ i=k,\dots,N,$ which are in the cache of user $d_k$ to $V_D$. \\
$\dots\dots\dots\dots$
\item Step $N$. All all subfiles of $W_{N}$ which are in the cache of user $d_N$ to $V_D$.
\end{itemize}

\noindent After receiving $RF$ transmissions from the server, user $V_D$ can proceed inductively to decode $W_{{1}},\dots,W_{{N}}$. Next, we look at the cache size of $V_D$.

\medskip

Assume that each file $W_i$ is partitioned into $F$ subfiles $W_{i,j}$, $j=1,\dots,F$. Let $U_i$ denote the set of indices $j$ such that the subfiles $W_{1,j}$ are in the cache of user $i$, that is,
$$U_i=\{j\in [F]: \ W_{1,j} \ \text{is in the cache of user} \ i\}.$$
Note that $|U_i|=Z=\frac{MF}{N}$ for any $i\in [K]$ by Lemma \ref{observation}. Moreover due to the symmetry of the scheme, the subfiles which are in the cache of user $i$, $i\in [K]$, are $\{W_{l,j}: l\in [N], j\in U_i\}$. The cache of $V_D$ includes the following subfiles.
\begin{itemize}
\item $|U_{d_1}|$ subfiles of $W_1$.
\item $|U_{d_1}\cup U_{d_2}|$ subfiles of $W_2$. 
\item $|U_{d_1}\cup U_{d_2}\cup U_{d_3}|$ subfiles of $W_3$. \\
$\dots\dots\dots\dots$
\item $|U_{d_1}\cup\cdots\cup U_{d_N}|$ subfiles of $W_N$.
\end{itemize} 
The number of subfiles in the cache of $V_D$ is $C_D=\sum_{k=1}^N |U_{d_1}\cup\cdots\cup U_{d_k}|$.
As $V_{D}$ is able to decode all $NF$ subfiles of $W_{1},\dots, W_{N}$, the server needs to send at least $NF-C_{D}$ transmissions. Hence
\begin{equation} \label{intersec}
RF\geq NF-\sum_{k=1}^N |U_{d_1}\cup\cdots\cup U_{d_k}|.
\end{equation}
Taking (\ref{intersec}) over all ordered $N$-subsets $D=(d_1,\dots,d_N)$ of $[K]$, we obtain
\begin{eqnarray*}
\frac{K!}{(K-N)!} RF &\geq & \frac{K!}{(K-N)!}NF-\sum_{(d_1,\dots,d_N)}\sum_{k=1}^N |U_{d_1}\cup\cdots\cup U_{d_k}| \\
&=& \frac{K!}{(K-N)!}NF-\sum_{k=1}^N \sum_{(d_1,\dots,d_N)} |U_{d_1}\cup\cdots\cup U_{d_k}| 
\end{eqnarray*} 
Note that each term $|U_{d_1}\cup\cdots\cup U_{d_k}|$ appears exactly $\frac{k!(K-k)!}{(K-N)!}$ times in the sum above. So
\begin{equation}\label{lowerbound}
\frac{K!}{(K-N)!}RF \geq \frac{K!}{(K-N)!}NF-\sum_{k=1}^N\frac{k!(K-k)!}{(K-N)!}\sum_{\substack{\{d_1,\dots, d_k\}\subset [K]\\ |\{d_1,\dots,d_k\}|=k}} |U_{d_1}\cup\cdots\cup U_{d_k}|.
\end{equation}
Note that for each $j\in[F]$, there are $t=\frac{KM}{N}$ sets $U_i$ which contain $j$ (see Lemma \ref{observation}).
By counting the number of $(k+1)$-sets $\{d_1,\dots,d_k,j\}$ in which $d_i\in [K]$ for all $i$ and $j\in [F]$ such that $j\in \left( U_{d_1}\cup\cdots\cup U_{d_k}\right)$, we obtain 
\begin{equation} \label{count}
\sum_{\substack{\{d_1,\dots, d_k\}\subset [K]\\ |\{d_1,\dots,d_k\}|=k}} |U_{d_1}\cup\cdots\cup U_{d_k}|=F\left({K\choose k}-{K-t\choose k}\right).
\end{equation}
By (\ref{lowerbound}) and (\ref{count}), we obtain
\begin{eqnarray*}
\frac{K!}{(K-N)!}RF&\geq & \frac{K!}{(K-N)!}NF-\sum_{k=1}^N\frac{k!(K-k)!}{(K-N)!}F\left({K\choose k}-{K-t\choose k}\right) \\
&=& \frac{K!}{(K-N)!}NF-\frac{K!}{(K-N)!}NF+\frac{F}{(K-N)!}\sum_{k=1}^{N}\frac{(K-t)!(K-k)!}{(K-k-t)!},
\end{eqnarray*}
which implies
\begin{eqnarray*}
R &\geq & \sum_{k=1}^{N}\frac{(K-t)!(K-k)!}{K!(K-k-t)!} =\frac{\sum_{k=1}^N {K-k\choose t}}{{K\choose t}} = \frac{{K\choose t+1}-{K-N\choose t+1}}{{K\choose t}},
\end{eqnarray*}
where in the last equality, we use
$$\sum_{k=m+1}^n{k\choose l}=\sum_{k=1}^n {k\choose l}-\sum_{k=1}^m{k\choose l}={n+1\choose l+1}-{m+1\choose l+1}$$
for any positive integers $l,m,n$ with $m< n$. Continuing on the last inequality on $R$ and noting that $t=KM/N$, we obtain
$$R\geq \frac{K-KM/N}{1+KM/N}-\frac{{K-N\choose KM/N+1}}{{K\choose KM/N}},$$ 
proving (\ref{opt1}). 

\medskip

\textbf{Case 2.} Assume $K<N$. In this case, we need to show that 

\begin{equation}\label{opt2}
R\geq \frac{K-KM/N}{1+KM/N}.
\end{equation}
The idea for the proof of this case is similar to that of the last case, but with a little switch.
Let $D=(d_1,\dots,d_K)$ be any permutation of the set $[K]$.
Assume that user $d_i$ requests file $W_i$, $i=1,\dots,K$.
We also consider a virtual user $V_D$ whose cache is filled as follows.
\begin{itemize}
\item \textbf{Step $1$.} Add all subfiles of $W_i, \ i=1,\dots,N,$ which are in the cache of user $d_1$ to $V_D$. 
\item \textbf{Step $2$.} Add all subfiles of $W_i, \ i=2,\dots,N,$ which are in the cache of user $d_2$ to $V_D$. \\
$\dots\dots\dots\dots$
\item \textbf{Step $k$.} Add all subfiles of $W_i, \ i=k,\dots,N,$ which are in the cache of user $d_k$ to $V_D$. \\
$\dots\dots\dots\dots$
\item \textbf{Step $K$.} All all subfiles of $W_i, \ i=K,\dots,N,$ which are in the cache of user $d_K$ to $V_D$.
\end{itemize}
After receiving $RF$ transmissions from the server, $V_D$ can proceed inductively to decode $W_1,\dots,W_K$.
Define the sets $U_i, \ i=1,\dots, K$, as in the last case. The cache of user $V_D$ contains the following.

\begin{itemize}
\item $|U_{d_1}|$ subfiles of $W_1$.
\item $|U_{d_1}\cup U_{d_2}|$ subfiles of $W_2$. \\
$\dots\dots\dots\dots$
\item $|U_{d_1}\cup\cdots\cup U_{d_K}|$ subfiles of $W_K$.
\item $|U_{d_1}\cup\cdots\cup U_{d_K}|$ subfiles of $W_{K+1}$.\\
$\dots\dots\dots\dots$
\item $|U_{d_1}\cup\cdots\cup U_{d_K}|$ subfiles of $W_N$.
\end{itemize} 
The additional switch we mentioned is the following.
For each file $W_i$, $i=K+1,\dots,N$, we send the missing $F-|U_{d_1}\cup\cdots\cup U_{d_K}|$ subfiles of $W_i$ to $V_D$. After receiving $RF$ transmissions from the server and the extra $(N-K)(F-|U_{d_1}\cup\cdots\cup U_{d_K}|)$ missing subfiles, $V_D$ can decode all $N$ files $W_1,\dots,W_N$.
By similar reasoning as the last case, we have
$$RF+(N-K)(F-|U_{d_1}\cup\cdots\cup U_{d_K}|)\geq NF-\left(\sum_{k=1}^K|U_{d_1}\cup\cdots\cup U_{d_k}|+(N-K)|U_{d_1}\cup\cdots\cup U_{d_K}|\right),$$
which implies 
\begin{equation}\label{intersec2}
RF\geq KF-\sum_{k=1}^K|U_{d_1}\cup\cdots\cup U_{d_k}|.
\end{equation}
Taking (\ref{intersec2}) over all permutations $D=(d_1,\dots,d_K)$ of $[K]$, we obtain
\begin{eqnarray*}
K!RF &\geq & K!KF-\sum_{k=1}^K\sum_{(d_1,\dots,d_K)}|U_{d_1}\cup\cdots\cup U_{d_k}| \\
&=& K!KF-\sum_{k=1}^K k!(K-k)!\sum_{\substack{\{d_1,\dots,d_k\}\subset [K]\\ |\{d_1,\dots,d_k\}|=k}} |U_{d_1}\cup\cdots\cup U_{d_k}| \\
&=& K!KF-\sum_{k=1}^K k!(K-k)! \left({K\choose k}-{K-t\choose k}\right)F,
\end{eqnarray*}
where the last equality follows from (\ref{count}). Continuing on the last inequality on $R$, we obtain
$$R\geq \sum_{k=1}^K\frac{(K-t)!(K-k)!}{K!(K-k-t)!}=\sum_{k=1}^K\frac{{K-k\choose t}}{{K\choose t}}=\frac{{K\choose t+1}}{{K\choose t}}=\frac{K-t}{1+t},$$
proving (\ref{opt2}).
\end{proof}

\medskip

In the next lemma, we classify the case of equality in (\ref{optimal}) to complete the proof of Theorem \ref{main_lowerbound}.

\begin{lem} \label{main2}
There exists a symmetric uncoded caching scheme with $K$ users, $N$ files, user's cache rate $\frac{M}{N}$ and rate $R$ satisfying
\begin{equation}\label{conditions}
R=\frac{K-KM/N}{1+KM/N}-\frac{{K-\min\{K,N\}\choose KM/N+1}}{{K\choose KM/N}} \ \ \text{and} \ \ \frac{M}{N}\leq \frac{\min\{N,K\}}{K}
\end{equation}
if and only if
\begin{equation} \label{sufficient}
F \equiv 0\pmod{{K\choose KM/N}}.
\end{equation}
\end{lem}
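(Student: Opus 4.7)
The plan is to prove both implications.

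For sufficiency, suppose $F = cF^*$ for a positive integer $c$, where $F^* = \binom{K}{KM/N}$. I construct an optimal scheme by applying the Yu et al.\ scheme on $c$ independent layers: index each file's $F$ subfiles by pairs $(S,r)$, with $S$ a $t$-subset of $[K]$ (here $t = KM/N$) and $r \in [c]$, and have user $i$ cache $(S,r)$ precisely when $i \in S$. Running Yu et al.'s delivery strategy independently on each layer $r$ uses $cR^*F^* = R^*F$ transmissions in total, so the rate is still $R^*$, and a direct check confirms the scheme is symmetric in the sense of Definition~\ref{symmetric}.

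For necessity, assume $R = R^*$. The first step is to observe that in the averaging argument of Lemma \ref{main1}, equality forces each individual $C_D$ (and its Case 2 analogue) to equal the specific constant $NF - R^*F$ (respectively $KF - R^*F$), since each summand already satisfies $C_D \geq NF - RF$ and the total sum is tight. The second step converts this constancy into constancy of $|U_{d_1}\cup\cdots\cup U_{d_k}|$ by the size $k$. Two orderings $D, D'$ differing only by a transposition of positions $k$ and $k+1$ have all summands of $C_D$ equal to those of $C_{D'}$ except the one at position $k$, so the constancy of $C_D$ forces $|U_{T_k(D)}| = |U_{T_k(D')}|$, where $T_k(D), T_k(D')$ are two $k$-subsets of $[K]$ differing in exactly one element (an edge of the Johnson graph $J(K,k)$). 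Since $J(K,k)$ is connected, $|U_T|$ depends only on $|T|$ for $k = 1, \dots, \min\{K,N\}-1$; the remaining size is forced by the identity $C_D = \text{const}$ after the lower terms are already constant.

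The third step is an inclusion-exclusion induction. Writing $|U_{d_1}\cup\cdots\cup U_{d_k}| = \sum_{\emptyset \neq T' \subseteq T}(-1)^{|T'|+1}\bigl|\bigcap_{i \in T'} U_i\bigr|$ for $T = \{d_1,\dots,d_k\}$, an induction on $k$ shows that $\bigl|\bigcap_{i \in T} U_i\bigr|$ also depends only on $|T|$ for $|T|=1,\dots,\min\{K,N\}$. The hypothesis $M/N \leq \min\{K,N\}/K$ is exactly $t \leq \min\{K,N\}$, so this includes $|T| = t$. But for $|T|=t$ we have $\bigl|\bigcap_{i \in T} U_i\bigr| = n_T$, the number of subfile indices whose cache pattern is exactly $T$, and $\sum_{|T|=t}n_T = F$. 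Constancy of $n_T$ therefore forces $n_T = F/\binom{K}{t}$, which must be a positive integer, giving $F \equiv 0 \pmod{F^*}$.

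The main obstacle I anticipate is the transposition/Johnson-graph step, in particular executing it uniformly for both $K \geq N$ and $K < N$ and making sure the boundary size $k = \min\{K,N\}$ is covered by subtraction from the constant $C_D$ rather than by a transposition (which would require one more coordinate than is available). Once $|U_T|$ is known to depend only on $|T|$, the inclusion-exclusion inversion producing $n_T$ constant — and hence the divisibility $F^* \mid F$ — is routine.
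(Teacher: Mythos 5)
Your proposal is correct and follows essentially the same route as the paper's proof: equality forces the virtual-user cache sizes $C_D$ to be constant, swaps of adjacent positions (which you phrase as edges of the Johnson graph $J(K,k)$ and the paper phrases as an induction on intersection size) show $|U_{d_1}\cup\cdots\cup U_{d_k}|$ depends only on $k$, inclusion--exclusion then makes $|\cap_{i\in T}U_i|$ depend only on $|T|$, and evaluating at $|T|=t$ (valid because $t\le\min\{K,N\}$ under the hypothesis) forces $F/\binom{K}{t}$ to be an integer; sufficiency is the same layered Yu-et-al.\ construction that appears in the paper's appendix.
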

\begin{proof}
First, we consider the case $K\geq N$. Recall that for any $i\in[K]$, we define
$$U_i=\{1\leq j\leq F: \ W_{1,j} \ \text{is in the cache of user} \ i\}.$$
By the proof of Lemma \ref{main1}, the equality 
$$R=\frac{K-KM/N}{1+KM/N}-\frac{{K-N \choose KM/N+1}}{{K\choose KM/N}}$$
implies that all inequalities (\ref{intersec}) become equalities, that is, all sums
\begin{equation}\label{sums}
S_{(d_1,\dots,d_N)}=|U_{d_1}|+|U_{d_1}\cup U_{d_2}|+\cdots+|U_{d_1}\cup\cdots\cup U_{d_N}|
\end{equation}
are the same over all ordered $N$-subsets $(d_1,\dots,d_N)$ of $[K]$.

\textbf{Claim.} For any fixed $k\in [N]$, the terms $T_{I_k}=|\cup_{i\in I_k} U_i|$
are the same over all choices of $k$-subsets $I_k$ of $[K]$.\\
\textit{Proof of Claim.} If $k=1$, then it is clear that the claims holds because $|U_i|=Z$ for any $i$. From now on, we assume $k\geq 2$. Let $I_k$ and $J_k$ be any two $k$-subsets of $[K]$. We prove $T_{I_k}=T_{J_k}$ by induction on the intersection size $|I_k\cap J_k|$.

If $|I_k\cap J_k|=k$, then $I_k=J_k$ and it is clear that $T_{I_k}=T_{J_k}$.
Next, assume that $|I_k\cap J_k|=k-1$. Write $I_k=\{i_1,\dots,i_{k-1},i\}$ and $J_k=\{i_1,\dots,i_{k-1},j\}$. 
If $k=N$, then using $S_{(i_1,\dots,i_{k-1},i)}=S_{(i_1,\dots,i_{k-1},j)}$ from (\ref{sums}), we obtain $T_{I_k}=T_{J_k}$.
Assume $k<N$.
Let $\{d_{k+2},\dots,d_N\}$ be any subset $[K]$ which has empty intersection with $I_k\cup J_k$.
This set is empty if $N=k+1$.
Using
$$S_{(i_1,\dots,i_{k-1},i,j,d_{k+2},\dots,d_N)}=S_{(i_1,\dots,i_{k-1},j,i,d_{k+2},\dots,d_N)},$$
from (\ref{sums}), we obtain $T_{I_k}=T_{J_k}$.
Thus $T_{I_k}=T_{J_k}$ in the case $|I_k\cap J_k|=k-1$.

Assume $T_{I_k}=T_{J_k}$ for $|I_k\cap J_k|\in \{l,l+1,\dots,k\}$, where $l\leq k-1$ is a positive integer.
Now, suppose that $I_k$ and $J_k$ are any two $k$-subsets of $[K]$ such that $|I_k\cap J_k|=l-1$.
Write $$I_k=\{c_1,\dots,c_{l-1},i_l,\dots,i_k\}, \ J_k=\{c_1,\dots,c_{l-1},j_l,\dots,j_k\}.$$
Define $I=\{c_1,\dots,c_{l-1},i_l,\dots,i_{k-1},j_k\}$. Note that $|I\cap I_k|=k-1\geq l$ and $|I\cap J_k|=l$.
By the inductive assumption, we obtain
$$T_{I_k}=T_I=T_{J_k},$$
proving the claim.

\medskip

Now, we use the claim to finish the proof for the case $K\geq N$. By the claim, it is clear (by induction on $k$) that for any $k\in [N]$, all intersections $|U_{d_1}\cap \cdots\cap U_{d_k}|$ are the same over all choices of $k$-subsets $\{d_1,\dots,d_k\}$ of $[K]$. Next, note that $t=KM/N\leq N$, as $M/N\leq N/K$ by (\ref{conditions}). Fix a positive integer $k\leq t$. Counting the number of $(k+1)$-sets $\{d_1,\dots,d_k,j\}$ in which $d_i\in [K]$ for all $i$ and $j\in [F]$ such that $j\in \left( U_{d_1}\cap\cdots\cap U_{d_k} \right)$, we obtain

\begin{equation}\label{keyequa}
\sum_{\substack{\{d_1,\dots,d_k\}\subset [K] \\ |\{d_1,\dots,d_k\}|=k}} |U_{d_1}\cap \cdots \cap U_{d_k}|={t \choose k}F.
\end{equation}
In (\ref{keyequa}), letting $k=t$ and noting that all terms $|U_{d_1}\cap\cdots\cap U_{d_t}|$ have the same value (this holds because $t\leq N$), we obtain
$$F\equiv 0 \pmod{{K\choose t}},$$
proving (\ref{sufficient}) in the case $K\geq N$. 

\medskip

The case $K<N$ is proved in the exact same way as the last case.
In this case, all inequalities (\ref{intersec2}) become equalities. Thus all sums  
$$S_{(d_1,\dots,d_K)}=\sum_{k=1}^K|U_{d_1}\cup\cdots\cup U_{d_k}|$$
are the same over all choices of permutations $(d_1,\dots,d_K)$ of $[K]$. Using this property, we obtain that for each $k\in [K]$, all terms $T_{I_k}=|\cup_{i\in I_k}U_i|$ are the same over all choices of $k$-subsets $I_k$ of $[K]$.
Hence the intersections $|U_{d_1}\cap \cdots \cap U_{d_k}|$ are the same over all choices of $k$-subsets $\{d_1,\dots,d_k\}$ of $[K]$.
We obtain equation (\ref{keyequa}) and the congruence $F\equiv 0 \pmod{{K\choose t}}$ is achieved by letting $k=t$ in this equation.
Note that we always have $t=KM/N\leq K$ in this case and it is safe to let $k=t$ in (\ref{keyequa}).
The details are left to the readers.

\medskip

Lastly, it remains to prove that if $F\equiv 0\pmod{{K\choose KM/N}}$, then there exists a symmetric uncoded caching scheme with rate $R= \frac{K-KM/N}{1+KM/N}-\frac{{K-\min\{K,N\}\choose KM/N+1}}{{K\choose KM/N}}$. In fact, the proposed scheme does not require $\frac{M}{N}\leq \frac{\min\{N,K\}}{K}$. This scheme is a slight modification of the scheme in \cite[Section III.B]{yu} and is presented in the appendix.

\medskip

\end{proof}

\medskip

\begin{rmk} \label{remark}
In Theorem \ref{main_lowerbound}, we prove that there is no symmetric uncoded caching scheme with rate $R=R^*$ and subpacketization level $F<F^*$ if the user's cache rate $\frac{M}{N}$ does not exceed $\frac{\min\{K,N\}}{K}$. The remaining open case is $K>N$ and $\frac{M}{N}>\frac{N}{K}$. In this case, we have $t=KM/N>N$ and the equation (\ref{keyequa}) implies 
\begin{equation}\label{congr}
\sum_{\substack{\{d_1,\dots,d_k\}\subset \{1,\dots,K\} \\ |\{d_1,\dots,d_k\}|=k}} |U_{d_1}\cap \cdots \cap U_{d_k}|={t\choose k}F \ \ \text{for any} \ \ k=1,\dots,N.
\end{equation}
Note that for each $k\leq N$, all terms on the left-hand side of (\ref{congr}) are the same, which implies
$$F\equiv 0 \ \mod \ \frac{{K\choose k}}{\gcd\left({K \choose k},{t\choose k}\right)} \ \text{for} \ k=1,\dots,N.$$
An open question is whether these congruence equations imply either $F\equiv 0\pmod{F^*}$ or $F>F^*$.
\end{rmk}

\bigskip

\section{A near-optimal scheme} \label{section_construction}
In this section, we give a detailed analysis on the performance of the scheme \cite[Construction I]{shang} in order to provide an affirmative answer to Question \ref{question2} proposed in the introduction.
The authors in \cite{shang} proposed this scheme via the language of hypergraph. 
We will not use this graph theoretic approach in our study. 
To make our result self-contained, we include both a description of the scheme and a simple proof for its implementability.
We recall Theorem \ref{main_constr0} for the convenience of the readers.

\begin{thm} \label{mainconstr}
Let $n$ be a positive integer. Let $a$ and $b$ be non-negative integers such that $a+b\leq n$. Then there exists a symmetric uncoded caching scheme with the following parameters.
\begin{equation}\label{para}
K={n\choose a}, \ F={n\choose b}, \ \frac{M}{N}=\frac{{n\choose b}-{n-a\choose b}}{{n\choose b}}, \ R=\frac{{n\choose a+b}}{{n\choose b}}.
\end{equation}
Furthermore, let $\epsilon> 0$ be a positive real number. The above scheme, with suitable choices of $a,b,n$, has parameters $R,K$ and $F$ satisfying the following conditions.
\begin{itemize}
\item[(i)] $R^*\leq R \leq R^*(1+\epsilon)$,
\item[(ii)] $K\leq F\leq K^{1+\epsilon}$, and
\item[(iii)] $F^*\geq F^{(\log F)^{1/\epsilon}}$.
\end{itemize}
\end{thm}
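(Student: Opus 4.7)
The plan is to treat the theorem in two stages: first a combinatorial construction that verifies the stated parameters, then a choice of $a,b,n$ in terms of $\epsilon$ together with the asymptotic estimates needed for (i)--(iii).

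For the construction, I would index the $K=\binom{n}{a}$ users by $a$-subsets $A\subseteq[n]$ and the $F=\binom{n}{b}$ subfile positions by $b$-subsets $B\subseteq[n]$, and have user $A$ cache $W_{i,B}$ for every $i\in[N]$ and every $B$ with $A\cap B\neq\emptyset$. The number of such $B$ is $\binom{n}{b}-\binom{n-a}{b}$, matching the stated $M/N$. In the delivery phase, for every $(a+b)$-subset $S\subseteq[n]$ the server broadcasts
\[
\bigoplus_{A\subseteq S,\ |A|=a} W_{d_A,\, S\setminus A},
\]
where $d_A$ is the file requested by user $A$. For any $A_0\subseteq S$ with $|A_0|=a$, every summand other than $W_{d_{A_0},S\setminus A_0}$ is in $A_0$'s cache because for $A\neq A_0$ inside $S$ one has $A_0\cap(S\setminus A)\supseteq A_0\setminus A\neq\emptyset$, so $A_0$ recovers its missing subfile. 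This uses $\binom{n}{a+b}$ transmissions and gives $R=\binom{n}{a+b}/\binom{n}{b}$, and symmetry (Definition~\ref{symmetric}) is immediate from the uniform caching rule.

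For the asymptotic part, my plan is to write $b=n-a-c$, so that $\binom{n}{b}=\binom{n}{a+c}$ and $\binom{n}{a+b}=\binom{n}{c}$. Using the identity $\binom{n}{a}\binom{n-a}{c}=\binom{n}{a+c}\binom{a+c}{a}$, direct computation gives
\[
t=\tfrac{KM}{N}=K-\binom{a+c}{a},\qquad R=\frac{\binom{a+c}{a}}{\binom{n-c}{a}},
\]
while, taking $N\geq K$ so that the correction term in $R^{*}$ vanishes, the dominant term of $R^{*}$ from Theorem~\ref{main_lowerbound} is $(K-t)/(t+1)\sim\binom{a+c}{a}/\binom{n}{a}$. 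Hence
\[
\frac{R}{R^{*}}\;\sim\;\prod_{i=0}^{a-1}\frac{n-i}{n-c-i}\;=\;1+O\!\left(\tfrac{a(a+c)}{n}\right),
\]
so the three conditions reduce to (i) $a(a+c)/n=O(\epsilon)$, (ii) $\binom{n}{a+c}\leq\binom{n}{a}^{1+\epsilon}$, and (iii) $\binom{K}{\binom{a+c}{a}}\geq F^{(\log F)^{1/\epsilon}}$. I would then set $a=\lceil\log_2 n\rceil$ and $c=\lceil\epsilon a\rceil$ and let $n\to\infty$: then $a(a+c)/n=O((\log n)^2/n)\to 0$, giving (i); Stirling gives $\log F/\log K\to 1+\epsilon$, giving (ii); and $\binom{a+c}{a}\geq(1+c/a)^{a}\geq(1+\epsilon/2)^{\log_2 n}$ is a positive power of $n$, so $\log F^{*}\geq n^{\Omega(1)}\log K$ dwarfs the polylogarithmic $(\log F)^{1+1/\epsilon}$, giving (iii).

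The main obstacle is the careful bookkeeping of these binomial/Stirling estimates so that all lower-order error terms get absorbed into the slack $\epsilon$, particularly in passing between $R^{*}$ and its leading term $(K-t)/(t+1)$; this is cleanly handled by restricting to $N\geq K$, which kills the correction $\binom{K-\min\{K,N\}}{t+1}/\binom{K}{t}$. The lower bound $R\geq R^{*}$ in (i) is free from Theorem~\ref{main_lowerbound}, so only the upper bound requires the analysis above, and conceptually everything turns on the correct regime $b\approx n-a$ with $a$ growing like $\log n$.
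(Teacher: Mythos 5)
Your construction and its decoding argument are exactly the paper's (users indexed by $a$-sets, subfiles by $b$-sets, caching iff the sets meet, one XOR per $(a+b)$-set), and deriving $R\geq R^*$ by simply invoking Lemma~\ref{main1} on the (manifestly symmetric) scheme is a legitimate shortcut for the lower bound in (i). Where you diverge is the parameter regime: the paper takes $c=\lceil 1+1/\epsilon\rceil$ \emph{constant} and $a=\lceil(\log n)^c\rceil$, whereas you take $a=\lceil\log_2 n\rceil$ and $c=\lceil\epsilon a\rceil$ growing proportionally to $a$. Your regime makes (iii) much more comfortable (with $\binom{a+c}{a}\geq(1+c/a)^a$ a positive power of $n$, $\log F^*$ is polynomial in $n$ while $(\log F)^{1+1/\epsilon}$ is polylogarithmic), and (i) goes through since $R/R^*\leq\binom{n}{a}/\binom{n-c}{a}\leq e^{ac/(n-a-c)}\to 1$.

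The genuine gap is in (ii). You need the hard inequality $F\leq K^{1+\epsilon}$, but you only argue that $\log F/\log K\to 1+\epsilon$, which does not imply $\log F/\log K\leq 1+\epsilon$; your choice sits exactly on the boundary of (ii), so the conclusion hinges on second-order terms you never control. Worse, with $c=\lceil\epsilon a\rceil$ the inequality actually fails for most large $n$ when $\epsilon$ is small: writing $\theta=\lceil\epsilon a\rceil-\epsilon a\in[0,1)$, a Stirling expansion gives
$$\frac{\log F}{\log K}=\Bigl(1+\epsilon+\tfrac{\theta}{a}\Bigr)\Bigl(1-\tfrac{\ln(1+\epsilon)}{\ln n}(1+o(1))\Bigr)=1+\epsilon+\frac{\theta\ln 2-(1+\epsilon)\ln(1+\epsilon)}{\ln n}(1+o(1)),$$
and since $(1+\epsilon)\ln(1+\epsilon)\approx\epsilon<\ln 2$ for small $\epsilon$, the correction is positive whenever $\theta$ is not small, i.e.\ for most residues of $a$. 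The theorem only requires \emph{some} suitable $a,b,n$, so this is repairable — take $c=\lfloor\epsilon a\rfloor$ (making $\tfrac{a+c}{a}\leq 1+\epsilon$ with a strictly negative correction), or target $1+\epsilon/2$, or follow the paper and keep $c$ constant so that $F\leq K^{(a+c)/a}$ with exponent tending to $1$, safely below $1+\epsilon$ — but as written the justification of (ii) is invalid and the specific parameters you prescribe do not satisfy it for infinitely many $n$.
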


In preparation for the proof of Theorem \ref{mainconstr}, we prove the following lemma on the approximation of binomial coefficients which will be used repeatedly later.

\begin{lem} \label{approx_bin}
Let $f(n)$ and $g(n)$ be positive integers which are functions of $n$ such that 
$$\lim_{n\rightarrow\infty} \frac{f(n)}{g(n)}=\lim_{n\rightarrow\infty}\frac{f(n)^2}{g(n)}=0.$$ 
Then
\begin{equation} \label{approx}
\lim_{n\rightarrow \infty}{g(n)\choose f(n)}\times \frac{f(n)!}{g(n)^{f(n)}}=1.
\end{equation}
\end{lem}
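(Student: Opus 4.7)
The plan is to rewrite the left-hand side of \eqref{approx} as a manageable product, take logarithms, and apply a Taylor estimate that the two hypotheses are exactly designed to control.

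First I would expand the binomial coefficient as a falling factorial and cancel the $f(n)!$:
\[
\binom{g(n)}{f(n)}\cdot \frac{f(n)!}{g(n)^{f(n)}}
= \frac{g(n)(g(n)-1)\cdots(g(n)-f(n)+1)}{g(n)^{f(n)}}
= \prod_{i=0}^{f(n)-1}\left(1-\frac{i}{g(n)}\right).
\]
So it suffices to prove that $\sum_{i=0}^{f(n)-1}\log\bigl(1-i/g(n)\bigr)\to 0$ as $n\to\infty$, since $\exp$ is continuous.

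Next, I would use the elementary bounds $-x-x^2\le\log(1-x)\le -x$, valid for all $x\in[0,1/2]$. Because $i/g(n)\le f(n)/g(n)\to 0$ by the first hypothesis, for all sufficiently large $n$ every term $i/g(n)$ with $0\le i\le f(n)-1$ lies in $[0,1/2]$, so these bounds apply uniformly. Summing gives
\[
-\frac{1}{g(n)}\sum_{i=0}^{f(n)-1} i-\frac{1}{g(n)^2}\sum_{i=0}^{f(n)-1} i^2
\;\le\;\sum_{i=0}^{f(n)-1}\log\!\left(1-\frac{i}{g(n)}\right)
\;\le\;-\frac{1}{g(n)}\sum_{i=0}^{f(n)-1} i.
\]

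The main (though routine) step is then to check that both outer bounds tend to $0$. The linear sum is $\frac{f(n)(f(n)-1)}{2g(n)}\le \frac{f(n)^2}{2g(n)}\to 0$ directly from the second hypothesis. For the quadratic correction, I would bound $\sum_{i=0}^{f(n)-1} i^2\le f(n)^3$, so
\[
\frac{1}{g(n)^2}\sum_{i=0}^{f(n)-1} i^2\le \frac{f(n)^3}{g(n)^2}=\frac{f(n)}{g(n)}\cdot\frac{f(n)^2}{g(n)},
\]
and the product of the two hypothesis quantities tends to $0$. Thus the middle sum is squeezed to $0$, and exponentiating yields \eqref{approx}.

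I do not expect any real obstacle: the only subtlety is making sure the Taylor bounds apply uniformly in $i$, which is handled by the first hypothesis, and splitting the quadratic error as the product of the two given null quantities so that a single hypothesis like $f(n)^2/g(n)\to 0$ alone (without $f(n)/g(n)\to 0$) would not suffice.
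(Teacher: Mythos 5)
Your proof is correct and takes essentially the same route as the paper: both squeeze the product $\prod_{i=0}^{f(n)-1}\left(1-\frac{i}{g(n)}\right)$ between $1$ and a lower bound whose logarithm is $O\!\left(f(n)^2/g(n)\right)$ (the paper simply bounds every factor below by $1-f(n)/g(n)$ and uses $\left(1-f(n)/g(n)\right)^{f(n)}\to e^{-f(n)^2/g(n)}$, instead of your termwise Taylor expansion of the logarithm). Only your final aside is off: since $f(n)\ge 1$ we have $f(n)/g(n)\le f(n)^2/g(n)$, so the hypothesis $f(n)^2/g(n)\to 0$ alone would already suffice.
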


\begin{proof}
We have
$$\frac{(g(n)-f(n))^{f(n)}}{f(n)!}\leq  {g(n)\choose f(n)}=\frac{g(n)(g(n)-1)\dots (g(n)-f(n)+1)}{f(n)!}\leq \frac{g(n)^{f(n)}}{f(n)!},$$
so
\begin{equation}\label{bound}
\left(1-\frac{f(n)}{g(n)}\right)^{f(n)} \leq {g(n)\choose f(n)}\frac{f(n)!}{g(n)^{f(n)}}\leq 1.
\end{equation}
Note that 
\begin{equation} \label{bound2}
\lim_{n\rightarrow\infty}\left(1-\frac{f(n)}{g(n)}\right)^{f(n)}=\lim_{n\rightarrow\infty} \left(1-\frac{f(n)}{g(n)}\right)^{\frac{g(n)}{f(n)} \frac{f(n)^2}{g(n)}}=\lim_{n\rightarrow\infty} e^{-\frac{f(n)^2}{g(n)}}=1.
\end{equation}
The equation (\ref{approx}) follows from (\ref{bound}) and (\ref{bound2}).
\end{proof}

\medskip

Now we are ready for the proof of Theorem \ref{mainconstr}. 

\begin{proof}[Proof of Theorem \ref{mainconstr}]
We define a scheme as follows.
\begin{enumerate}
\item Each user is labeled by a subset $A$ of $[n]$ such that $|A|=a$. The number of users is $K={n\choose a}$.
\item Assume that $N$ files are $W_1,\dots,W_N$. Each file $W_i$ is partitioned into $F={n \choose b}$ subfiles $\{W_{i,B}: B\subset [n], |B|=b\}$.
\item In the placement phase, user $A$ caches subfile $W_{i,B}$, $1\leq i\leq N$, if and only if $A\cap B\neq \emptyset$. In this way, user $A$ caches 
$$MF=N\left( {n \choose b}-{n-a \choose b}\right)$$
subfiles. The user cache rate $M/N$ is
$$\frac{M}{N}=\frac{{n \choose b}-{n-a \choose b}}{{n \choose b}}.$$
\item In the delivery phase, assume that user $A$ requests file $W_{d_A}$. For each subset $C$ of $[n]$ of size $|C|=a+b$, the server sends
$$Y_C=\sum_{A'\subset C: |A'|=a}W_{d_{A'},C\setminus A'}.$$
Note that the server needs to send $RF={n\choose a+b}$ messages, so
$R={n\choose a+b}{n \choose b}^{-1}$.
\end{enumerate}
It is clear that the above scheme has parameters as in (\ref{para}).
We claim that any user $A$ can decode its requested file $W_{d_A}$.
First, all subfiles $W_{d_A,B}$ with $A\cap B\neq \emptyset$ are already in the cache of $A$, so $A$ needs only to retrieve missing subfiles $W_{d_A,B}$ with $A\cap B=\emptyset$.
Fix such a subfile $W_{d_A,B}$.
Put $C=A\cup B$.
In the message $Y_C=\sum_{A'\subset C: |A'|=a}W_{d_{A'},C\setminus A'}$ sent to $A$ by the server, all subfiles $W_{d_{A'},C\setminus A'}, A'\neq A$, are already in the cache of $A$, as $(C\setminus A')\cap A\neq \emptyset$.
Hence $A$ can retrieve the subfile $W_{d_A,C\setminus A}=W_{d_A,B}$.

\medskip

Next, we prove that there is a choice of parameters $a,b,n$ such that the proposed scheme satisfies the conditions (i)-(iii). Put 
\begin{equation}\label{values}
c=\lceil1+1/\epsilon \rceil, \ a= \lceil (\log n)^{c} \rceil, \ b=n-a-c.
\end{equation}
The integer $n$ will be chosen to be big enough and its value is specified later.
Note that $c\geq 1+1/\epsilon$ and $\lim_{n\rightarrow\infty} \frac{n-b}{a}=1$.
To prove (i)-(iii), it suffices to show the following.
\begin{itemize}
\item[(a)] $R\geq R^*$ and $\lim_{n\rightarrow\infty} \frac{R}{R^*}=1$,
\item[(b)] $F\geq K$ for $n$ large enough and $\lim_{n\rightarrow\infty}\frac{F}{K^{(n-b)/a}}<1.$ 
\item[(c)] $\lim_{n\rightarrow\infty}\frac{\log F^*}{(\log F)^c}=\infty$.
\end{itemize}
The proof of (a)-(c) is divided into three claims.

\medskip

\textbf{Claim 1.} $R\geq R^*$ and $\lim_{n\rightarrow\infty} \frac{R}{R^*}=1$.\\
By (\ref{para}), we have $KM/N= {n\choose a}-{n-b \choose a}$. Define 
\begin{equation}\label{inter}
R_0=\frac{K-KM/N}{1+KM/N}=\frac{{n-b \choose a}}{1+{n\choose a}-{n-b \choose a}}.
\end{equation}
Note that by the definition of $R^*$, see (\ref{rate_optimal}), we have $R_0\geq R^*$. Hence to prove $R\geq R^*$, it suffices to show that $R\geq R_0$.
We have 
\begin{equation}\label{rate}
\frac{R}{R_0}=\frac{{n\choose a+b}\left(1+{n\choose a}-{n-b \choose a}\right) }{{n\choose b}{n-b\choose a}}=\frac{{n\choose a}-{n-b \choose a}+1}{{a+b \choose a}}.
\end{equation}
The inequality $R\geq R_0$ is equivalent to
\begin{equation}\label{lower1}
{a+b\choose a}+{n-b\choose a}\leq {n\choose a}+1.
\end{equation}
Viewing $g(b)={a+b\choose a}+{n-b\choose a}$ as a function of $b$ on the interval $[0,n-a]$, we observe that 
$$g(b)\leq g(b+1)\Leftrightarrow b\geq (n-a-1)/2.$$
The function $g(b)$ decreases on the interval $[0,(n-a-1)/2]$ and increases on $[(n-a-1)/2,n-a]$. Thus its maximum is either $g(0)$ or $g(n-a)$. As $g(0)=g(n-a)={n\choose a}+1$, the inequality (\ref{lower1}) follows and we obtain $R\geq R^*$. 

\medskip

 Next, we prove $\lim_{n\rightarrow\infty} \frac{R^*}{R}=1$ by showing that $\lim_{n\rightarrow\infty}\frac{R^*}{R}\geq 1$ (note that we already have $\frac{R^*}{R}\leq 1$ by the previous paragraph). By (\ref{rate_optimal}), we have
 $$R^*\geq \frac{K-KM/N}{1+KM/N}-\frac{{K-1\choose KM/N+1}}{{K\choose KM/N}}=\frac{K-KM/N}{1+KM/N}\left(\frac{1}{K}+\frac{M}{N}\right)=R_0\left(\frac{1}{K}+\frac{M}{N}\right).$$
 It is clear that $\lim_{n\rightarrow \infty}\frac{1}{K}=\lim_{n\rightarrow \infty}{n\choose a}^{-1}=0$.
 Moreover, note that $\frac{M}{N}=1-{n-a\choose b}{n\choose b}^{-1}$ by (\ref{para}) and $\frac{R_0}{R}\geq {a+b\choose a}{n\choose a}^{-1}$ by (\ref{rate}). We obtain
 
 \begin{equation}\label{ineq1} 
 \lim_{n\rightarrow\infty}\frac{R^*}{R}\geq \lim_{n\rightarrow\infty} \frac{{a+b\choose a}}{{n\choose a}}\left(1-\frac{{n-a\choose b}}{{n\choose b}}\right)=\lim_{n\rightarrow\infty} \frac{{a+b\choose a}}{{n\choose a}}\left(1-\frac{{n-a\choose c}}{{n\choose a+c}}\right),
 \end{equation}
On the other hand, by (\ref{values}) we have
 
$$\lim_{n\rightarrow\infty}\frac{a^2}{a+b}=\lim_{n\rightarrow\infty}\frac{a^2}{n}=\lim_{n\rightarrow\infty}\frac{c^2}{n-a}=\lim_{n\rightarrow\infty}\frac{(a+c)^2}{n}=0.$$
Using (\ref{approx}), we obtain

\begin{equation} \label{upper}
 \lim_{n\rightarrow\infty}\frac{{a+b\choose a}}{{n\choose a}}=\lim_{n\rightarrow\infty}\frac{(a+b)^a/a!}{n^a/a!}=\lim_{n\rightarrow\infty} \left(1-\frac{c}{n}\right)^a=\lim_{n\rightarrow \infty}e^{-\frac{ac}{n}}=1
\end{equation}
and
\begin{equation}\label{lowerr}
\lim_{n\rightarrow\infty}\frac{{n-a\choose c}}{{n\choose a+c}}=\lim_{n\rightarrow\infty}\frac{(n-a)^c/c!}{n^{a+c}/(a+c)!}=\lim_{n\rightarrow\infty}\left(1-\frac{a}{n}\right)^c\frac{(c+1)\cdots(c+a)}{n^a}=0,
\end{equation}
where in the last equality, we use 
$$\lim_{n\rightarrow\infty}\left(1-\frac{a}{n}\right)^c=\lim_{n\rightarrow\infty}e^{-ac/n}=1$$ and 
$$\lim_{n\rightarrow\infty} \frac{(c+1)\cdots (c+a)}{n^a}\leq \lim_{n\rightarrow\infty}\left(\frac{a+c}{n}\right)^a=\lim_{n\rightarrow\infty}e^{-ab/n}=0.$$
By (\ref{ineq1}), (\ref{upper}) and (\ref{lowerr}), we obtain
$$\lim_{n\rightarrow\infty}\frac{R^*}{R}\geq 1,$$
finishing the proof of Claim 1.

\medskip

\textbf{Claim 2.} $F\geq K$ for $n$ large enough and $\lim_{n\rightarrow\infty}\frac{F}{K^{(n-b)/a}}<1$.\\
Note that $F={n\choose b}={n\choose n-b}$ and $K={n \choose a}$. As $a\leq n-b<n/2$ for $n$ large enough, we have $F\geq K$ for $n$ large enough. Using Lemma \ref{approx_bin}, we obtain
$$\lim_{n\rightarrow\infty}\frac{F}{K^{(n-b)/a}}=\lim_{n\rightarrow\infty}\frac{n^{n-b}/(n-b)!}{(n^a/a!)^{(n-b)/a}}=\lim_{n\rightarrow\infty}\left( \frac{(a!)^{a+c}}{\left((a+c)!\right)^a}\right)^{1/a}.$$ 
Note that
$$\frac{(a!)^{a+c}}{\left((a+c)!\right)^a}=\frac{(a!)^c}{\left((a+1)\cdots (a+c)\right)^a}<\frac{(a!)^c}{a^{ca}},$$ 
so 
\begin{equation}\label{limit1}
\lim_{n\rightarrow\infty}\frac{F}{K^{(n-b)/a}}\leq \lim_{n\rightarrow\infty} \left(\frac{(a!)^{1/a}}{a}\right)^c.
\end{equation}
By Stirling's approximation formula, we have $\lim_{n\rightarrow\infty}m!\left(\frac{e}{m}\right)^m(2\pi m)^{-1/2}=1$, so $m!<2\sqrt{2\pi m}\left(\frac{m}{e}\right)^m$ for $m$ large enough, which implies
\begin{equation}\label{limit2}
\frac{(m!)^{1/m}}{m}<\frac{m^{\frac{1}{2m}}}{e}(2\sqrt{2\pi})^{\frac{1}{m}}
\end{equation}
for $m$ large enough. Note that $\lim_{n\rightarrow\infty} a=\infty$. By (\ref{limit1}) and (\ref{limit2}), we obtain 
$$\lim_{n\rightarrow\infty}\frac{F}{K^{(n-b)/a}}\leq  \lim_{n\rightarrow\infty} \left(\frac{a^{\frac{1}{2a}}(2\sqrt{2\pi})^{\frac{1}{a}}}{e}\right)^c=\frac{1}{e^c}<1,$$
proving Claim 2.

\medskip

\textbf{Claim 3.} $\lim_{n\rightarrow\infty}\frac{\log F^*}{(\log F)^c}=0$.\\
Note that $F={n\choose b}$ and $F^*={K\choose KM/N}={{n\choose a}\choose {n-b\choose a}}$.
We will use Lemma \ref{approx_bin} to approximate the fraction $\frac{\log F^*}{(\log F)^c}$. For the approximation of $\log F^*$, observe that
$$0<\frac{{n-b\choose a}^2 }{{n\choose a}}=\frac{(n-b)^2\cdots (n-b-a+1)^2}{n\cdots (n-a+1)a!}\leq \left(\frac{(n-b)^2}{n-a+1}\right)^a\frac{1}{a!}\leq \frac{1}{a}$$
for $n$ large enough (in the last inequality, we use $\lim_{n\rightarrow\infty} \frac{(n-b)^2}{n-a+1}=0$).
Hence $\lim_{n\rightarrow \infty}{n-b\choose a}^2{n\choose a}^{-1}=0$.
By Lemma \ref{approx_bin}, we have

\begin{equation}\label{equal}
\lim_{n\rightarrow\infty}\frac{\log F^*}{(\log F)^c}=\lim_{n\rightarrow\infty} \frac{\log \frac{{n\choose a}^{n-b\choose a}}{{n-b\choose a}!}}{\left(\log\frac{n^{n-b}}{(n-b)!}\right)^c}= \lim_{n\rightarrow\infty}\frac{{n-b\choose a}\log {n\choose a}-\log {n-b\choose a}!}{\left((n-b)\log n-\log(n-b)!\right)^c}.
\end{equation}
Next, we compute the limit in (\ref{equal}) by finding dominating terms in both numerator and denominator, then calculating the ratio of these two terms.
First, considering the denominator, we see that
$$0<\frac{\log (n-b)!}{(n-b)\log n}<\frac{\log (n-b)}{\log n}=\frac{\log (a+c)}{\log n} \leq \frac{\log \left((\log n)^c+1+c\right)}{\log n}.$$
So
\begin{equation}\label{compo3}
\lim_{n\rightarrow\infty}\frac{\log (n-b)!}{(n-b)\log n}=0.
\end{equation}
Next, considering the numerator in (\ref{equal}), we see that 
$$\frac{\log {n-b\choose a}!}{{n-b\choose a}\log {n\choose a}}\leq \frac{\log{n-b\choose a}}{\log {n\choose a}}=\frac{\sum_{i=0}^{a-1} \log\frac{n-b-i}{a-i}}{\sum_{i=0}^{a-1} \log \frac{n-i}{a-i}},$$
which implies
$$0<\frac{\log {n-b\choose a}!}{{n-b\choose a}\log {n\choose a}}\leq \frac{a\log (n-b-a+1)}{a\log \frac{n}{a}}= \frac{\log (c+1)}{\log \frac{n}{\lceil (\log n)^{c}\rceil}}\leq \frac{\log (c+1)}{\log \frac{n}{(\log n)^c+1}}.$$
As $c$ is a fixed integer, we have
\begin{equation}\label{compo4}
\lim_{n\rightarrow\infty}\frac{\log {n-b\choose a}!}{{n-b\choose a}\log {n\choose a}}=0.
\end{equation} 
By (\ref{equal}), (\ref{compo3}) and (\ref{compo4}), we obtain

\begin{equation} \label{inequal}
\lim_{n\rightarrow\infty}\frac{\log F^*}{(\log F)^c}=\lim_{n\rightarrow\infty}\frac{{n-b\choose a}\log {n\choose a}}{(n-b)^c(\log n)^c}=\lim_{n\rightarrow\infty}\frac{\log n^a-\log a!}{c!(\log n)^c},
\end{equation}
where in the last equality, we use (\ref{approx}) to approximate ${n-b\choose a}$ by $(n-b)^c/c!$ and approximate ${n\choose a}$ by $n^a/a!$.
Note that $a=\lceil (\log n)^c\rceil \leq (\log n)^{c}+1$, so $0< \frac{\log a!}{\log n^a}\leq\frac{\log a}{\log n}\leq \frac{\log ((\log n)^{c}+1)}{\log n}$, which implies $\lim_{n\rightarrow\infty}\frac{\log a!}{\log n^a}=0$. 
By (\ref{inequal}), we obtain
$$\lim_{n\rightarrow\infty}\frac{\log F^*}{(\log F)^c}=\lim_{n\rightarrow\infty}\frac{\log n^a}{c!(\log n)^c}=\lim_{n\rightarrow\infty} \frac{\lceil (\log n)^c \rceil}{c!(\log n)^{c-1}}=\infty,$$
proving Claim 3.
\end{proof}

\bigskip

\bigskip

\section{Conclusion} \label{section_conclusion}
In this paper, we study symmetric uncoded caching schemes with low subpacketization levels.
Let $K,M,N$ be parameters of a symmetric uncoded caching scheme, that is, a scheme with $K$ users, $N$ files and user's cache rate $\frac{M}{N}$.
We focus on understanding the trade-off between the rate $R$ and the subpacketization level $F$ of these schemes.
It is known \cite{yu} that the optimal rate $R^*=\frac{K-KM/N}{1+KM/N}-\frac{{K-\min\{K,N\}\choose KM/N+1}}{{K\choose KM/N}}$ can be obtained using the subpacketization level $F^*={K\choose KM/N}$.
However, it is unknown whether we can obtain the same rate with a smaller subpacketization level.
In the case that the answer is no, it is desirable to have a scheme with rate $R$ asymptotically close to $R^*$ and subpaketization level $F$ polynomially large compared to $K$.

\medskip

Our contribution in this paper is to provide answers for the above questions.
Firstly, we prove that if $\frac{M}{N}\leq \frac{\min\{K,N\}}{K}$, then there is no symmetric uncoded caching scheme with rate $R=R^*$ and subpacketization level $F<F^*$.
We also show that in this case, $F\equiv 0 \pmod{F^*}$ is the necessary and sufficient condition for the existence of a symmetric uncoded caching scheme with rate $R=R^*$.
Secondly, we give a detailed analysis for the scheme in \cite[Construction I]{shang} to show that this scheme has rate $R$ asymptotically close to $R^*$, subpacketization level $F$ polynomial in $K$ and $F^*$ sub-exponential in $F$.
To conclude this paper, we propose several open questions in this research direction.

\begin{question}
Construct other schemes with parameters $R,F,K$ satisfying the following conditions.
\begin{itemize}
\item[(i)] $R$ is asymptotically close to $R^*$,
\item[(ii)] $F^*$ is sub-exponentially large compared to $F$, and
\item[(iii)] $F$ is polynomially large compared to $K$.
\end{itemize}
\end{question}

\begin{question}
The scheme proposed in Theorem \ref{mainconstr} has rate $R$ asymptotically close to $R^*$ and $F=K^{1+o(1)}$. Prove or disprove the following statement:
There exists a symmetric uncoded caching scheme with $R$ asymptotically close to $R^*$ and $F$ linear in $K$.
\end{question}

\begin{question}
In Theorem \ref{main_lowerbound}, we proved that there is no symmetric uncoded caching scheme with $R=R^*$ and $F<F^*$, given that the parameters $K,M,N$ satisfy the constraint $\frac{M}{N}\leq \frac{\min\{K,N\}}{K}$. The remaining open case is the case of schemes with $K>N$ and $\frac{M}{N}>\frac{N}{K}$. Is it true that these schemes also cannot have $R=R^*$ and $F<F^*$?
\end{question}

\bigskip

\bigskip

\section*{Appendix: Optimal scheme}
In this section, we provide construction for a symmetric uncoded caching scheme with subpacketization level $F=h{K\choose KM/N}$, where $h$ is a positive integer, and rate 
$$R=\frac{K-KM/N}{1+KM/N}-\frac{{K-\min\{K,N\} \choose KM/N+1}}{{K\choose KM/N}}.$$
The scheme is designed as follows.
\begin{enumerate}
\item Let $W_1,\dots,W_N$ denote $N$ files. Put $t=KM/N$. Each $W_i$ is partitioned into $F$ subfiles $W_{i,j,S}$, where $j\in [h]$ and $S$ is a subset of $[K]$ with $|S|=t$.
\item The users are denoted by numbers $1,\dots,K$. User $u$ stores $W_{i,j,S}$ in its cache if and only if $u\in S$. In this way, each user $u$ stores $Z=h {K-1\choose t-1}$ subfiles of each file $W_i$. 
\item Let $d=(d_1,\dots,d_K)$ be a demand from the users such that user $u$ requests $W_{d_u}$. The delivery proceeds as follows.
\begin{itemize}
\item Let $e$ be the number of distinct files from the set $\{W_{d_1},\dots,W_{d_K}\}$ and assume $U=\{i_1,\dots,i_e\}\subset [K]$ is a set of users requesting these $e$ files. 
\item For any $j\in [h]$ and any $A\subset [K]$ with $|A|=t+1$ and $A\cap U\neq \emptyset$, the server sends
$$Y_{j,A}=\sum_{i\in A} W_{d_i,j,A\setminus \{i\}}.$$
\end{itemize}
\end{enumerate}

\noindent The number of files sent in the above scheme is 
$$h\left( {K\choose t+1}-{K-e \choose t+1} \right)\leq h\left( {K\choose t+1}-{K-\min\{K,N\} \choose t+1} \right),$$
as $e\leq \min\{K,N\}$.
The equality happens when the users request $e=\min\{K,N\}$ distinct files.
Thus the scheme has rate
$$R =\frac{h\left( {K\choose t+1}-{K-\min\{K,N\} \choose t+1} \right)}{F}= \frac{K-KM/N}{1+KM/N}-\frac{{K-\min\{K,N\} \choose KM/N+1}}{{K\choose KM/N}}.$$
It remains to show that any user $u$ can decode its requested message $W_{d_u}$.
As $u$ already has the subfiles $W_{d_u,j,S}$ with $S\ni u$ in its cache, it only needs to recover the missing subfiles $W_{d_u,j,S}$ with $u\not\in S$.
This can be done if user $u$ knows all messages 
$$Y_{j,B}=\sum_{i\in B}W_{d_i,j,B\setminus\{i\}}, \ B\subset \{1,\dots,K\} \ \text{and} \ |B|=t+1.$$
Indeed, let $W_{d_u,j,S}$, $S\not\ni u$, is a subfile not in the cache of $u$. Put $B=S\cup\{u\}$.
As user $u$ knows $Y_{j,B}$ and has all subfiles $W_{d_i,j,B\setminus\{i\}}$, $i\neq u$, in its cache (note that $B\setminus \{i\}\ni u$ for any $i\neq u$), user $u$ can retrieve the subfile $W_{d_u,j,S}$. 

\medskip

Now we prove that all $Y_{j,B}$ are known by all users.
As the server sends directly all $Y_{j,B}$ with $B\cap U\neq \emptyset$ to the users, the unsent ones are $Y_{j,B}$ with $B\cap U=\emptyset$.
Fix $B\subset \{1,\dots,K\}$ such that $|B|=t+1$ and $B\cap U=\emptyset$.
Put $C=B\cup U$. Let $\mathcal{V}$ be the set of all $e$-subsets $V$ of $C$ such that the users in $V$ request all $e$ distinct files from $\{W_{d_1},\dots,W_{d_K}\}$.
Note that $U\in \mathcal{V}$.
The message $Y_{j,B}$ is obtained by the following equation whose proof is from \cite[Lemma 1]{yu}.
\begin{equation}\label{retrival}
\oplus_{V\in \mathcal{V}}Y_{j,C\setminus V}=0.
\end{equation}
For any $V\neq U$, we have $(C\setminus V)\cap U\neq\emptyset$, so the message $Y_{j,C\setminus V}$ is sent directly by the server. Thus $Y_{j,B}$ is the only unknown component in (\ref{retrival}) and its value can be obtained from (\ref{retrival}).

\end{document}